\documentclass[11pt,a4paper]{article}
\usepackage{amssymb}
\usepackage{amsmath}
\usepackage{amsfonts}
\usepackage{dsfont}
\usepackage{amsthm}
\usepackage{mathrsfs}
\usepackage{hyperref}
\usepackage{color}
\usepackage[margin=2.41cm]{geometry}
\usepackage[all,cmtip]{xy}
\usepackage[utf8]{inputenc}
\usepackage{graphicx}
\usepackage{varwidth}
\usepackage{comment}


\usepackage{upgreek}
\usepackage{rotating}

\usepackage{tikz}
\usetikzlibrary{shapes.geometric}

\usepackage{tikz-cd}
\usetikzlibrary{matrix,arrows,calc,decorations.pathmorphing,fit,shapes.geometric,decorations.pathreplacing,positioning}

\usepackage[shortlabels]{enumitem}



\definecolor{darkred}{rgb}{0.8,0.1,0.1}
\hypersetup{
     colorlinks=false,         
     linkcolor=darkred,
     citecolor=blue,
}

\theoremstyle{plain}
\newtheorem{theo}{Theorem}[section]
\newtheorem{lem}[theo]{Lemma}
\newtheorem{propo}[theo]{Proposition}
\newtheorem{cor}[theo]{Corollary}

\theoremstyle{definition}
\newtheorem{defi}[theo]{Definition}

\newenvironment{ex}
  {\pushQED{\qed}\exx}
  {\popQED\endexx}

\newenvironment{cex}
  {\pushQED{\qed}\cexx}
  {\popQED\endcexx}

\newenvironment{rem}
  {\pushQED{\qed}\remm}
  {\popQED\endremm}

\numberwithin{equation}{section}

\def\nn{\nonumber}

\def\bbK{\mathbb{K}}
\def\bbR{\mathbb{R}}

\def\bbZ{\mathbb{Z}}

\def\bbL{\mathbb{L}}

\def\Hom{\mathrm{Hom}}

\def\id{\mathrm{id}}

\def\supp{\mathrm{supp}}

\def\dd{\mathrm{d}}

\def\1{\mathbf{1}}
\def\oone{\mathds{1}}
\def\op{\mathrm{op}}

\def\Loc{\mathbf{Loc}}
\def\Lan{\operatorname{Lan}}

\def\AQFT{\mathbf{AQFT}}
\def\2AQFT{\mathbf{2AQFT}}

\def\Fun{\mathbf{Fun}}

\def\Disk{\mathbf{Disk}}

\def\Alg{\mathbf{Alg}}

\def\Vec{\mathbf{Vec}}
\def\Ch{\mathbf{Ch}}

\def\BB{\mathbf{B}}
\def\CC{\mathbf{C}}
\def\DD{\mathbf{D}}

\def\EE{\mathbf{E}}

\def\TT{\mathbf{T}}
\def\Cat{\mathbf{Cat}}
\def\OCat{\mathbf{Cat}^{\perp}}

\def\sSet{\mathbf{sSet}}

\def\Op{\mathbf{Op}}

\def\Tot{\mathrm{Tot}}

\def\AAA{\mathfrak{A}}
\def\BBB{\mathfrak{B}}

\def\BBB{\mathfrak{B}}

\def\FFF{\mathfrak{F}}
\def\GGG{\mathfrak{G}}

\def\O{\mathcal{O}}
\def\P{\mathcal{P}}
\def\Q{\mathcal{Q}}

\def\t{\mathsf{t}}

\def\Mor{\mathrm{Mor}}

\newcommand\und[1]{\underline{#1}}
\newcommand\ovr[1]{\overline{#1}}

\DeclareMathOperator*{\Motimes}{\text{\raisebox{0.25ex}{\scalebox{0.8}{$\bigotimes$}}}}

\def\sk{\vspace{1mm}}

\makeatletter
\let\@fnsymbol\@alph
\makeatother

%


\title{%
Strictification theorems for the homotopy time-slice axiom
}

\author{%
Marco Benini$^{1,2,a}$, Victor Carmona$^{3,b}$\ and\ 
Alexander Schenkel$^{4,c}$\vspace{4mm}\\
{\small ${}^1$ Dipartimento di Matematica, Universit\`a di Genova,}\\
{\small Via Dodecaneso 35, 16146 Genova, Italy.}\vspace{2mm}\\
{\small ${}^2$ INFN, Sezione di Genova,}\\
{\small Via Dodecaneso 33, 16146 Genova, Italy.}\vspace{2mm}\\
{\small ${}^3$ Departamento de {\'A}lgebra, Universidad de Sevilla,}\\
{\small Avda Reina Mercedes s/n, 41012 Sevilla, Spain.}\vspace{2mm}\\
{\small ${}^4$ School of Mathematical Sciences, University of Nottingham,}\\
{\small University Park, Nottingham NG7 2RD, United Kingdom.}\vspace{4mm}\\
{\small \begin{tabular}{ll}
Email: & ${}^a$~\texttt{benini@dima.unige.it}\\
& ${}^b$~\texttt{vcarmona1@us.es}\\
& ${}^c$~\texttt{alexander.schenkel@nottingham.ac.uk}\vspace{2mm}
\end{tabular}
}
}

\date{February 2023}


\begin{document}

\maketitle

\begin{abstract}
\noindent It is proven that the homotopy time-slice axiom for many types of algebraic quantum field theories (AQFTs) taking values in chain complexes can be strictified. This includes the cases of Haag-Kastler-type AQFTs on a fixed globally hyperbolic Lorentzian manifold (with or without time-like boundary), locally covariant conformal AQFTs in two spacetime dimensions, locally covariant AQFTs in one spacetime dimension, and the relative Cauchy evolution. The strictification theorems established in this paper prove that, under suitable hypotheses that hold true for the examples listed above, there exists a Quillen equivalence between the model category of AQFTs satisfying the homotopy time-slice axiom and the model category of AQFTs satisfying the usual strict time-slice axiom.
\end{abstract}

\paragraph*{Keywords:} algebraic quantum field theory, gauge theory, 
homotopical algebra, chain complexes, operads, localizations

\paragraph*{MSC 2020:} 81T05, 81T70, 18N40, 18N55

\renewcommand{\baselinestretch}{0.8}\normalsize
\tableofcontents
\renewcommand{\baselinestretch}{1.0}\normalsize

\newpage


\section{\label{sec:intro}Introduction and summary}
The time-slice axiom is one of the central axioms of algebraic quantum field theory (AQFT).
It introduces a notion of time evolution on globally hyperbolic
Lorentzian manifolds, which is the key ingredient for analyzing the
physical behavior of an AQFT, for instance through the relative Cauchy evolution (RCE) 
and its associated stress-energy tensor \cite{BFV,FewsterVerch}, 
or through the local measurement schemes introduced in \cite{FewsterVerchmeasurement}.
\sk

At a more technical level, an AQFT is described by an algebra $\AAA : \O_{\ovr{\CC}}\to \TT$
over a suitable colored operad $\O_{\ovr{\CC}}$, called the AQFT operad \cite{BSWoperad}, that is
associated to a category of spacetimes $\ovr{\CC}$. (See Subsection 
\ref{subsec:AQFT} for the relevant mathematical background.) The symmetric monoidal target
category $\TT$ is arbitrary, but the traditional examples of AQFTs are based on 
the closed symmetric monoidal category $\TT=\Vec_\bbK$ of vector spaces over a field $\bbK$ of characteristic $0$.
In this context the time-slice axiom is implemented by demanding
that $\AAA$ sends a certain subset $W$ of the $1$-ary operations in $\O_{\ovr{\CC}}$, 
the so-called Cauchy morphisms, to isomorphisms in $\TT$.
\sk

A modern research stream in AQFT is the exploration of homotopical 
phenomena associated with quantum gauge theories, which has been initiated
in \cite{BSWhomotopy} by the development of model categories that describe 
AQFTs taking values in the closed symmetric monoidal model category $\TT = \Ch_\bbK$ 
of (possibly unbounded) chain complexes of $\bbK$-vector spaces. This provides a suitable axiomatic 
framework for quantum gauge theories on globally hyperbolic Lorentzian manifolds,
which captures the explicit examples obtained from the BRST/BV formalism for AQFT
\cite{FredenhagenRejzner,FredenhagenRejzner2}
or from employing the techniques of derived geometry \cite{LinearYM,BMS}.
A new phenomenon of such homotopical AQFTs is that the time-slice axiom
gets relaxed to what we call the {\em homotopy time-slice axiom},
which demands that a $\Ch_\bbK$-valued AQFT $\AAA : \O_{\ovr{\CC}}\to \Ch_{\bbK}$
sends every Cauchy morphism in $W$ to a {\em quasi-isomorphism} of chain complexes.
This homotopical relaxation from the strict time-slice axiom (involving isomorphisms) 
to the homotopy time-slice axiom (involving quasi-isomorphisms) is not only
natural from a homotopical algebra perspective to obtain an axiom that is stable
under weak equivalences of AQFTs, but it is also the variant of the time-slice axiom
that is fulfilled by the typical examples of quantum gauge theories
constructed, for instance, in \cite{FredenhagenRejzner,FredenhagenRejzner2} and \cite{LinearYM,BMS}.
\sk

While from an abstract point of view the relaxation from the strict to the homotopy 
time-slice axiom does not cause any serious complications, it unfortunately
does have a considerable impact on the concrete applicability of $\Ch_\bbK$-valued AQFTs
to physically motivated problems. For instance, formulating the relative Cauchy evolution \cite{BFV,FewsterVerch}
or setting up local measurement schemes \cite{FewsterVerchmeasurement} in this richer homotopical
context is considerably more involved than in the case of ordinary AQFTs, because
the maps $\AAA(f) : \AAA(M)\to \AAA(N)$ associated with Cauchy morphisms $f:M\to N$
do in general not admit strict inverses and working instead with quasi-inverses 
produces a tower of homotopy coherence data that is hard to control. 
A suitable strategy to circumvent these issues is to establish
{\em strictification theorems} that allow one to replace the homotopy time-slice 
axiom by the strict one. In fact, if it would be possible to replace the $\Ch_\bbK$-valued
AQFT $\AAA$ satisfying the homotopy time-slice axiom by a weakly equivalent AQFT $\AAA_{\mathrm{st}}$
that satisfies this axiom strictly, one could avoid all the practical complications mentioned 
above by working simply with the equivalent model $\AAA_{\mathrm{st}}$ instead of $\AAA$.
A first example of such strictification theorems, which is valid
for the special case where $\ovr{\CC}$ is the RCE category
and $\AAA$ is a linear homotopy AQFT, has been proven recently in \cite{BFSRCE}.
\sk

The aim of the present paper is to prove a variety of strictification theorems
for the homotopy time-slice axiom of AQFTs, including in particular the following relevant cases:
\begin{enumerate}[label=(\roman*)]
\item $\Ch_\bbK$-valued Haag-Kastler-type AQFTs on a fixed globally 
hyperbolic Lorentzian manifold $M$ (with or without time-like boundary);

\item $\Ch_\bbK$-valued locally covariant {\em conformal} AQFTs in two spacetime dimensions;

\item $\Ch_\bbK$-valued locally covariant AQFTs in one spacetime dimension;

\item $\Ch_\bbK$-valued AQFTs on the RCE category, generalizing the result in \cite{BFSRCE}.
\end{enumerate}
Unfortunately, we currently do not know if there exists a strictification theorem
for the homotopy time-slice axiom of $\Ch_\bbK$-valued locally covariant AQFTs
in spacetime dimension $m\geq 2$.
Our strictification theorems are not only more general than the one in \cite{BFSRCE},
but they are also stronger and more powerful in the following sense:
Instead of asking
the object-wise question whether the homotopy time-slice axiom of an individual 
$\Ch_\bbK$-valued AQFT $\AAA$ can be strictified, we address the
global strictification problem that asks whether the model category
$\mathcal{L}_{\widehat{W}}\AQFT(\ovr{\CC})$ of AQFTs that satisfy
the homotopy time-slice axiom (see \cite{Carmona} and Theorem \ref{theo:Bousfield}) 
is Quillen equivalent to the model category $\AQFT(\ovr{\CC}[W^{-1}])$ of AQFTs
that satisfy the strict time-slice axiom (see Corollary \ref{cor:TimeSlice via LocalizationOfOrtCat}).
Such global strictification theorems not only provide a solution of the object-wise
strictification problem, which as mentioned above is very useful for physical applications 
such as RCE or measurement schemes, but they are also conceptually very interesting as they 
show that the homotopy time-slice axiom has no higher homotopical content in these cases. 
The availability of strictification theorems seems to be a phenomenon
that is strongly tied to AQFTs on globally hyperbolic {\em Lorentzian} manifolds
and, in particular, it is linked to the one-dimensional behavior of time evolution.
Analogous results are not available for topological QFTs, formulated in 
the context of locally constant prefactorization algebras as in 
\cite{CostelloGwilliam,CostelloGwilliam2}, where the higher homotopical content 
of $m$-dimensional isotopy equivalences gives rise to the homotopically non-trivial 
$\mathsf{E}_m$-operads, see \cite[Theorem 5.4.5.9]{HigherAlgebra}, \cite{AyalaFrancis} or \cite{CFM}.
\sk

The outline of the remainder of this paper is as follows: 
In Section \ref{sec:prelim}, we collect the relevant preliminaries for this work.
Subsection \ref{subsec:AQFT} recalls some key aspects of
orthogonal categories and their associated AQFT operads from \cite{BSWoperad}.
Subsection \ref{subsec:localization} develops systematically a theory of localizations
of orthogonal categories and explains how these can be used to implement
the strict time-slice axiom, see in particular Corollary \ref{cor:TimeSlice via LocalizationOfOrtCat}.
Subsection \ref{subsec:modelstructures} recalls the projective model structure for
$\Ch_\bbK$-valued AQFTs from \cite{BSWhomotopy} and also the left Bousfield
localized model structure from \cite{Carmona} that captures the homotopy time-slice axiom.
The Quillen adjunction established in Proposition \ref{prop:localizedadjunction} 
is the key result that allows us to formulate and prove our strictification theorems
for the homotopy time-slice axiom. Our first strictification theorem
applies to reflective localizations of orthogonal categories (in the sense of Definition 
\ref{def:OCatreflocalization}) and it is proven in Section \ref{sec:reflectivestrictification},
see in particular Theorem \ref{theo:strictificationreflective}.
This strictification theorem covers the examples (i), (ii) and (iii) from the itemization above.
In Section \ref{sec:emptystrictification}, we focus on the case in which the orthogonal category
$\ovr{\CC}$ carries an empty orthogonality relation $\perp_{\CC}^{}= \emptyset$. We then prove in Theorem
\ref{theo:emptystrictification} that in this case a sufficient condition for a strictification theorem
for the homotopy time-slice axiom of AQFTs is that a simpler strictification problem at the level
of $\Ch_\bbK$-valued functors can be solved. Building on earlier results from \cite{BFSRCE}, 
we then show that this is the case for the RCE category, which leads to example (iv) from the itemization above.
Appendix \ref{app:Loc1} proves that the localization of the category $\Loc_1$
of connected globally hyperbolic Lorentzian $1$-manifolds at all Cauchy morphisms is reflective, which we need
to deduce example (iii) from Theorem \ref{theo:strictificationreflective}.


\section{\label{sec:prelim}Preliminaries}

\subsection{\label{subsec:AQFT}Orthogonal categories, colored operads and AQFTs}
Orthogonal categories \cite{BSWoperad} are an abstraction of the 
concept of a category of spacetimes with a notion of causally independent pairs of subregions  
$f_1 : M_1 \to N \leftarrow M_2 : f_2$.
The relevant definitions are as follows:
\begin{defi}
\begin{itemize}
\item[a)] An {\em orthogonal category} is a pair $\ovr{\CC} := (\CC,\perp_{\CC}^{})$ consisting of
a small category $\CC$ and a subset $\perp_{\CC}^{}  \, \subseteq \Mor\,\CC \,{}_{\t}{\times}_{\t} \,\Mor\,\CC$
(called orthogonality relation) 
of the set of pairs of morphisms to a common target, such that the following conditions hold true:
\begin{itemize}
\item[(i)] Symmetry: $(f_1,f_2)\in\,\perp_{\CC}^{}$ implies $(f_2,f_1)\in\,\perp_{\CC}^{}$.
\item[(ii)] Composition stability: $(f_1,f_2)\in \,\perp_{\CC}^{}$ implies
$(g\,f_1\,h_1, g\,f_2\,h_2)\in \,\perp_{\CC}^{}$
for all composable morphisms $g$, $h_1$ and $h_2$.
\end{itemize}
We often write $f_1\perp_{\CC}^{} f_2$ instead of $(f_1,f_2)\in\, \perp_{\CC}^{}$ to denote 
orthogonal pairs of morphisms.

\item[b)] An {\em orthogonal functor} $F : \ovr{\CC}\to\ovr{\DD}$ is a functor $F : \CC\to \DD$
between the underlying categories that preserves orthogonal pairs, i.e.\ 
$F(f_1)\perp_{\DD}^{} F(f_2)$ for all $f_1\perp_{\CC}^{} f_2$.

\item[c)] We denote by $\OCat$ the $2$-category whose objects are
orthogonal categories, $1$-morphisms are orthogonal functors and $2$-morphisms
are natural transformations between orthogonal functors.
\end{itemize}
\end{defi}

We would like to note in passing that the 
equivalences in the $2$-category $\OCat$ can be characterized explicitly.
\begin{lem}[{\cite[Lemma 2.7]{Skeletal}}]
An orthogonal functor $F : \ovr{\CC}\to \ovr{\DD}$ is an equivalence
in the $2$-category $\OCat$ if and only if the following two conditions hold true:
\begin{itemize}
\item[(i)] The underlying functor $F :\CC\to\DD$ is fully faithful and essentially surjective.
\item[(ii)] The orthogonality relation $\perp_{\CC}^{}\, = F^\ast(\perp_{\DD}^{})$ 
agrees with the pullback along $F$ (see \cite[Lemma 3.19]{BSWoperad}) of the orthogonality relation $\perp_\DD^{}$,
i.e.\ $f_1\perp_{\CC}^{}f_2$ if and only if $F(f_1)\perp_{\DD}^{}F(f_2)$.
\end{itemize}
\end{lem}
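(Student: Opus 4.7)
The plan is to prove the two implications separately. In the forward direction, the key observation is that $2$-morphisms in $\OCat$ are ordinary natural transformations between orthogonal functors, so equivalences in $\OCat$ forget to ordinary equivalences of categories in $\Cat$. This automatically yields condition (i). Condition (ii) will follow by using an orthogonal quasi-inverse of $F$ together with composition stability of the orthogonality relations. In the reverse direction, conditions (i) and (ii) allow us to lift any ordinary quasi-inverse to an orthogonal quasi-inverse, promoting the equivalence in $\Cat$ to one in $\OCat$.

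For the $(\Rightarrow)$ direction, let $G : \ovr{\DD}\to \ovr{\CC}$ be an orthogonal functor together with natural isomorphisms $\eta : \id_{\CC}\xrightarrow{\cong} GF$ and $\epsilon : FG\xrightarrow{\cong} \id_{\DD}$ witnessing the equivalence in $\OCat$. Forgetting orthogonality, (i) holds. The ``if'' part of (ii) is just the fact that $F$ preserves orthogonal pairs. For the ``only if'' part, suppose $F(f_1)\perp_{\DD}^{} F(f_2)$ for morphisms $f_i : M_i \to N$ in $\CC$. Since $G$ is orthogonal, this implies $GF(f_1) \perp_{\CC}^{} GF(f_2)$. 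Naturality of $\eta$ gives the identity $GF(f_i) = \eta_N \circ f_i \circ \eta_{M_i}^{-1}$. Applying composition stability of $\perp_{\CC}^{}$ (once with the inverse isomorphisms and once with $\eta_N^{-1}$ on the left and $\eta_{M_i}$ on the right) therefore yields $f_1\perp_{\CC}^{} f_2$, proving (ii).

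For the $(\Leftarrow)$ direction, condition (i) provides a quasi-inverse $G :\DD\to \CC$ in $\Cat$, with natural isomorphisms $\eta$ and $\epsilon$ as above. It suffices to verify that $G$ is an orthogonal functor, because then $\eta$ and $\epsilon$ will be $2$-morphisms in $\OCat$ by definition. Let $g_1 \perp_{\DD}^{} g_2$ be any orthogonal pair in $\ovr{\DD}$ with common target. By condition (ii), $G(g_1)\perp_{\CC}^{} G(g_2)$ holds if and only if $FG(g_1)\perp_{\DD}^{} FG(g_2)$ holds. The latter follows from composition stability of $\perp_{\DD}^{}$ applied to the identity $FG(g_i) = \epsilon_N^{-1}\circ g_i \circ \epsilon_{M_i}$ coming from naturality of $\epsilon$, using the starting hypothesis $g_1\perp_{\DD}^{} g_2$. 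Hence $G$ is orthogonal and $F$ is an equivalence in $\OCat$.

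The only subtle point is keeping track of how the natural isomorphisms $\eta$ and $\epsilon$ interact with the composition stability axiom; once this bookkeeping is set up, both directions reduce to two applications of composition stability combined with the fact that condition (ii) turns orthogonality statements in $\ovr{\CC}$ into equivalent orthogonality statements in $\ovr{\DD}$.
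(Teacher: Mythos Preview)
Your proof is correct. The paper does not actually supply its own proof of this lemma; it merely quotes the statement from \cite[Lemma~2.7]{Skeletal}, so there is nothing in the present source to compare against. Your argument is the standard one: forget to $\Cat$ for condition~(i), use composition stability together with the natural isomorphisms $\eta$ and $\epsilon$ to transport orthogonality back and forth for condition~(ii), and in the converse direction use condition~(ii) to verify that any chosen quasi-inverse $G$ is automatically orthogonal. This is exactly the approach one would expect and is presumably what appears in the cited reference.
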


It was shown in \cite{BSWoperad} that, associated to each orthogonal category
$\ovr{\CC}$, there is a colored operad $\O_{\ovr{\CC}}^{}$ 
that codifies the algebraic structure of AQFTs on $\ovr{\CC}$.
See also \cite{BSreview} for a concise review.
Recall that a colored operad (aka multicategory) is a generalization
of the concept of a category in which morphisms may have multiple inputs. 
More precisely, a colored operad $\P$ consists of
the following data:
\begin{itemize}
\item[(i)] a class of objects, sometimes also called colors;
\item[(ii)] for each tuple $(\und{c},t) = ((c_1,\dots,c_n),t)$ of objects, 
a set of operations $\P\big(\substack{t \\ \und{c}}\big)$ from $\und{c}$ to $t$;
\item[(iii)] composition maps $\gamma : \P\big(\substack{t \\ \und{c}}\big)\times \prod_{i=1}^n\P\big(\substack{c_i \\ \und{d}_i}\big) \to \P\big(\substack{t \\ (\und{d}_1,\dots,\und{d}_n)}\big)$;
\item[(iv)] unit elements $\oone \in \P\big(\substack{t\\ t}\big)$;
\item[(v)] permutation actions $\P(\sigma) : \P\big(\substack{ t \\ \und{c}}\big)\to \P\big(\substack{t \\ \und{c}\sigma}\big)$, for each $\sigma\in \Sigma_n$, where $\und{c}\sigma = (c_{\sigma(1)},\dots,c_{\sigma(n)})$.
\end{itemize}
These data have to satisfy the usual associativity, unitality and equivariance axioms, 
see e.g.\ \cite{Yau} for the details. Similarly to categories, we often denote
an operation $\phi \in \P\big(\substack{t \\ \und{c}}\big)$ by an arrow $\phi : \und{c}\to t$.
Given also a tuple of operations $\und{\psi} = (\psi_1,\dots,\psi_n)$, with $\psi_i : \und{d}_i\to c_i$,
we denote the composition by juxtaposition $\phi\,\und{\psi}  := 
\gamma(\phi, (\psi_1,\dots,\psi_n)) : \und{\und{d}}\to t$, where $\und{\und{d}} = (\und{d}_1,\dots,\und{d}_n)$.
The permutation actions will be denoted by dots $\phi\cdot \sigma := \P(\sigma)(\phi) : \und{c}\sigma\to t$.
Colored operads assemble into a $2$-category, see e.g.\ \cite{Mandell,Weiss} for the details.
\begin{defi}
We denote by $\Op$ the $2$-category whose objects are colored operads, $1$-morphisms 
are multifunctors and $2$-morphisms are multinatural transformations.
\end{defi}

The AQFT operad $\O_{\ovr{\CC}}^{}$ admits the following explicit description, see \cite{BSWoperad,BSreview}. 
\begin{defi}\label{def:AQFToperad}
Let $\ovr{\CC} = (\CC,\perp_{\CC}^{}) \in \OCat$ be an orthogonal category.
The associated {\em AQFT operad} $\O_{\ovr{\CC}}^{}\in \Op$ is the colored operad
that is defined by the following data:
\begin{itemize}
\item[(i)] the objects are the objects of $\CC$;

\item[(ii)] the set of operations from $\und{M} = (M_1,\dots, M_n)$
to $N$ is the quotient set
\begin{flalign}
\O_{\ovr{\CC}}^{}\big(\substack{N \\ \und{M}}\big)\,:=\,\bigg(\Sigma_n \times \prod_{i=1}^n \CC(M_i,N)\bigg)\Big/\!\sim_{\perp_{\CC}^{}}^{}\quad,
\end{flalign}
where $\CC(M_i,N)$ denotes the set of $\CC$-morphisms from $M_i$ to $N$
and the equivalence relation is defined as follows: $(\sigma,\und{f}) 
\sim_{\perp_{\CC}^{}}^{} (\sigma^\prime,\und{f}^\prime)$ if and only if 
$\und{f} = \und{f}^\prime$ and the right permutation
$\sigma\sigma^{\prime -1} : \und{f}\sigma^{-1}\to \und{f}\sigma^{\prime -1}$
is generated by transpositions of adjacent orthogonal pairs;

\item[(iii)] the composition of $[\sigma,\und{f}] : \und{M}\to N$ 
with $[\sigma_i,\und{g}_i] : \und{K}_i\to M_i$, for $i=1,\dots,n$, is
\begin{subequations}
\begin{flalign}
[\sigma , \und{f}]\, [ \und{\sigma}, \und{\und{g}}]\,:=\,
\big[\sigma(\sigma_1,\dots,\sigma_n), \und{f}\,\und{\und{g}}\,\big]\,:\,\und{\und{K}}\,\longrightarrow\,N\quad,
\end{flalign}
where $\sigma(\sigma_1,\dots,\sigma_n)$ denotes
the composition in the unital associative operad and
\begin{flalign}
 \und{f}\,\und{\und{g}} \,:=\, \big(f_1\,g_{11},\dots, f_1\,g_{1k_1},\dots, f_n\,g_{n1},\dots,f_{n}\,g_{n k_n}\big)
\end{flalign}
\end{subequations}
is given by compositions in the category $\CC$;

\item[(iv)] the unit elements are $\oone := [e,\id_N^{}] : N\to N$, where $e\in\Sigma_1$ is the identity permutation;

\item[(v)] the permutation action of $\sigma^\prime\in\Sigma_n$ on $[\sigma,\und{f}] : \und{M}\to N$ is
\begin{flalign}
[\sigma,\und{f}]\cdot \sigma^\prime \,:=\, [\sigma\sigma^\prime,\und{f}\sigma^\prime] : \und{M}\sigma^{\prime}~\longrightarrow~
 N\quad,
\end{flalign}
where $\und{f}\sigma^\prime = (f_{\sigma^\prime(1)},\dots,f_{\sigma^\prime(n)})$ 
and $\und{M}\sigma^{\prime}= (M_{\sigma^\prime(1)},\dots, M_{\sigma^{\prime}(n)})$ denote 
the permuted tuples and $\sigma\sigma^\prime$ is given by the group operation of the permutation group $\Sigma_n$.
\end{itemize}
\end{defi}
\begin{rem}\label{rem:genrel}
There exists a useful presentation of the colored operad $\O_{\ovr{\CC}}^{}$ in terms of generators
and relations, see \cite[Section 3.3]{BSWoperad} and also \cite[Section 2.2]{BSreview} for a concise review.
Very briefly, there are three types of generators
\begin{flalign}
\text{\begin{tabular}{c}
\begin{tikzpicture}[cir/.style={circle,draw=black,inner sep=0pt,minimum size=2mm},
        poin/.style={circle, inner sep=0pt,minimum size=0mm},scale=0.8, every node/.style={scale=0.8}]
%
%
\node[poin] (Hout) [label=above:{\small $N$}] at (0,1) {};
\node[poin] (Hin) [label=below:{\small $M$}] at (0,0) {};
\draw[thick] (Hin) -- (Hout) node[midway,left] {{\small $f$}};
%
%
\node[poin] (Uout) [label=above:{\small $N$}] at (2,1) {};
\node[poin] (Uin) [label=below:{\small $\emptyset$}] at (2,0) {};
\draw[thick] (Uin) -- (Uout) node[midway,left] {{\small $1_N^{}$}};
\draw[thick,fill=white] (Uin) circle (0.6mm);
%
%
\node[poin] (Mout) [label=above:{\small $N$}] at (4,1) {};
\node[poin] (Min1) [label=below:{\small $N$}] at (3.7,0) {};
\node[poin] (Min2) [label=below:{\small $N$}] at (4.3,0) {};
\node[poin] (V)  [label=right:{\small $\mu_N^{}$}] at (4,0.5) {};
\draw[thick] (Min1) -- (V);
\draw[thick] (Min2) -- (V);
\draw[thick] (V) -- (Mout);
\end{tikzpicture}
\end{tabular}}
\end{flalign}
that, in AQFT terminology, describe the pushforward of observables along morphisms 
$f:M\to N$ in $\CC$, the unit observable on $N$, and the multiplication of observables on $N$.
These generators have to satisfy various relations, which can be classified into
functoriality relations, algebra relations, compatibility relations and
$\perp_{\CC}^{}$-commutativity relations. The latter may be visualized as
\begin{flalign}
\text{\begin{tabular}{c}
\begin{tikzpicture}[cir/.style={circle,draw=black,inner sep=0pt,minimum size=2mm},
        poin/.style={circle, inner sep=0pt,minimum size=0mm},scale=0.8, every node/.style={scale=0.8}]
\node[poin] (Mout) [label=above:{\small $N$}] at (0,1.25) {};
\node[poin] (Min1) [label=below:{\small $M_1$}] at (-0.5,-0.25) {};
\node[poin] (Min2) [label=below:{\small $M_2$}] at (0.5,-0.25) {};
\node[poin] (V)  [label=right:{\small $\mu_N^{}$}] at (0,0.5) {};
\draw[thick] (Min1) -- (V) node[midway,left] {{\small $f_1$}};
\draw[thick] (Min2) -- (V) node[midway,right] {{\small $f_2$}};
\draw[thick] (V) -- (Mout);
\node[poin] (Eq) at (1.25,0.5) {$~~=~~$};
\node[poin] (MMout) [label=above:{\small $N$}] at (2.5,1.5) {};
\node[poin] (MMin1)  at (2,0) {};
\node[poin] (MMin2)  at (3,0) {};
\node[poin] (MMin11) [label=below:{\small $M_1$}] at (2,-0.5) {};
\node[poin] (MMin21) [label=below:{\small $M_2$}] at (3,-0.5) {};
\node[poin] (VV)  [label=right:{\small $\mu_N^{}$}] at (2.5,0.75) {};
\draw[thick] (MMin1) -- (VV) node[midway,left] {{\small $f_2$}};
\draw[thick] (MMin2) -- (VV) node[midway,right] {{\small $f_1$}};
\draw[thick] (VV) -- (MMout);
\draw[thick] (MMin11) -- (MMin2);
\draw[thick] (MMin21) -- (MMin1);
\end{tikzpicture}
\end{tabular}}\quad,
\end{flalign}
for all $(f_1:M_1\to N)\perp_{\CC}^{} (f_2:M_2\to N)$,
and their role is to implement the quotient in 
the set of operations from Definition \ref{def:AQFToperad}.
The physical interpretation of the $\perp_{\CC}^{}$-commutativity relations
is that they enforce `Einstein causality' for all orthogonal pairs of morphisms.
\end{rem}

The assignment $\ovr{\CC}\mapsto \O_{\ovr{\CC}}^{}$ of AQFT operads can be 
upgraded to a $2$-functor. Given an orthogonal functor $F : \ovr{\CC}\to \ovr{\DD}$,
we define the multifunctor
\begin{flalign}
\nn \O_F^{}\,:\, \O_{\ovr{\CC}}^{}~&\longrightarrow~\O_{\ovr{\DD}}^{}\quad,\\
\nn M~&\longmapsto~ F(M)\quad,\\
\big([\sigma,\und{f}]:\und{M}\to N\big)~&\longmapsto ~\big([\sigma,F(\und{f})] : F(\und{M})\to F(N)\big)\quad,\label{eqn:OFformula}
\end{flalign}
where $F$ is defined to act component-wise on tuples, i.e.\ $F(\und{M}) = (F(M_1),\dots,F(M_n))$
and also $F(\und{f}) = (F(f_1),\dots,F(f_n))$. Given further a natural transformation
$\chi: F\to G$ between orthogonal functors $F,G : \ovr{\CC}\to\ovr{\DD}$,
we define the multinatural transformation $\O_{\chi}^{} : \O_{F}^{}\to \O_{G}^{}$ between
the corresponding multifunctors $\O_{F}^{},\O_{G}^{} : \O_{\ovr{\CC}}^{}\to \O_{\ovr{\DD}}^{}$
by setting
\begin{flalign}
(\O_{\chi})_{M}^{}\,:=\,[e,\chi_M^{}]\, :\, F(M)~\longrightarrow~ G(M) \quad,
\end{flalign}
for all components $M\in \CC$. Summing up, we obtain
\begin{propo}\label{propo:Operad2functor}
The above defines a $2$-functor
\begin{flalign}\label{eqn:Operad2functor}
\O \,:\, \OCat~\longrightarrow~\Op
\end{flalign}
from the $2$-category of orthogonal categories to the $2$-category of colored operads.
\end{propo}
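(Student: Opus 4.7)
The plan is to verify in turn that (a) for each orthogonal functor $F$, the assignment $\O_F$ is a well-defined multifunctor, (b) for each natural transformation $\chi$, the assignment $\O_\chi$ is a well-defined multinatural transformation, and (c) these assignments strictly preserve identities, vertical composition of $2$-morphisms, composition of $1$-morphisms, and horizontal composition (whiskerings). The overarching strategy is that each verification reduces to a short manipulation of representatives $[\sigma,\und{f}]$ using the composition rule from Definition \ref{def:AQFToperad}(iii); the only place where the orthogonality data truly intervenes is well-definedness on the quotient $\sim_{\perp_\CC^{}}$.

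For (a), I would first check that \eqref{eqn:OFformula} descends to equivalence classes. If $(\sigma,\und{f})\sim_{\perp_\CC^{}}(\sigma',\und{f}')$ then $\und{f}=\und{f}'$ and $\sigma\sigma'^{-1}$ is a composition of transpositions of adjacent pairs that are orthogonal in $\ovr{\CC}$; since $F$ preserves orthogonality, the same transpositions are permitted in $\ovr{\DD}$, so $(\sigma,F(\und{f}))\sim_{\perp_\DD^{}}(\sigma',F(\und{f}'))$. Preservation of units is immediate, and preservation of composition and of the permutation actions in Definition \ref{def:AQFToperad}(iii)--(v) follows by applying $F$ component-wise to the tuples and using functoriality of $F : \CC \to \DD$.

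For (b), the required multinaturality square, evaluated on a generic operation $[\sigma,\und{f}] : \und{M}\to N$, reduces to the following computation: composing $[\sigma,G(\und{f})]$ with $(\O_\chi)_{M_i}=[e,\chi_{M_i}]$ in each input gives $[\sigma,(G(f_i)\,\chi_{M_i}^{})_i]$, while composing $(\O_\chi)_N = [e,\chi_N]$ with $[\sigma,F(\und{f})]$ gives $[\sigma,(\chi_N^{}\,F(f_i))_i]$, and these tuples agree component-wise by naturality of $\chi$. Well-definedness of the class $[e,\chi_M]$ is automatic since there is no permutation freedom in arity one.

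For (c), I would observe that $\O_{\id_{\ovr{\CC}}} = \id_{\O_{\ovr{\CC}}^{}}$ is immediate from \eqref{eqn:OFformula}, that $\O_{G\circ F}=\O_G\circ \O_F$ follows from $(G\circ F)(\und{f}) = G(F(\und{f}))$ under the same permutation, and that vertical and horizontal composites of multinatural transformations are given by the same formulas $[e,\chi_M]$ applied to the vertical/horizontal composites of the $\chi$'s in $\OCat$; strict preservation then becomes a one-line check. The main obstacle, modest as it is, lies entirely in (a): it is precisely the hypothesis that orthogonal functors preserve $\perp$ that makes the construction well-defined on the quotient, and this must be invoked carefully at each step involving representatives, while all remaining checks are routine manipulations.
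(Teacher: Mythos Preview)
Your proposal is correct and complete. The paper itself gives no proof of this proposition; it is stated with the preamble ``Summing up, we obtain'' as a summary of the preceding constructions, leaving the verifications implicit, so your detailed check of well-definedness, multinaturality, and strict preservation of the $2$-categorical structure supplies exactly what the paper omits.
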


AQFTs on $\ovr{\CC}$ are by definition algebras over the AQFT operad $\O_{\ovr{\CC}}^{}$
with values in a suitable closed symmetric monoidal category $\TT$. For ordinary AQFTs,
one usually takes $\TT=\Vec_{\bbK}$ to be the category of vector spaces
over a field $\bbK$ of characteristic $0$, while for homotopy AQFTs one takes $\TT=\Ch_{\bbK}$
to be the category of chain complexes. (The relevant model categorical aspects in the latter
case will be discussed later in Subsection \ref{subsec:modelstructures}.)
From our $2$-categorical perspective, there is the following slick definition
of the category of AQFTs on $\ovr{\CC}$. Recall that to each 
symmetric monoidal category $\TT$ one can assign a colored operad (which we denote with abuse of notation 
by the same symbol) that has the same objects and whose sets of operations are given by
\begin{flalign}
\TT\big(\substack{Y\\ \und{X}}\big)\,:=\, \TT\bigg(\Motimes_{i=1}^n X_i , Y\bigg)\quad,
\end{flalign}
for all tuples $(\und{X},Y)= ((X_1,\dots,X_n),Y)$ of objects in $\TT$.
Operadic composition is defined by compositions and tensor products of $\TT$-morphisms, the unit elements
correspond to the identity morphisms $\oone = \id_Y \in \TT\big(\substack{Y\\ Y}\big) = \TT(Y,Y)$
and the permutation actions are defined via the symmetric braiding.
\begin{defi}\label{def:AQFT}
Let $\ovr{\CC}\in \OCat$ be an orthogonal category and $\TT$ a closed symmetric monoidal category.
The {\em category of $\TT$-valued AQFTs on $\ovr{\CC}$} is defined
as the $\Hom$-category
\begin{flalign}
\AQFT(\ovr{\CC})\,:=\, \Alg_{\O_{\ovr{\CC}}^{}}^{}\big(\TT\big) \,:=\, \Hom_{\Op}^{}\big(\O_{\ovr{\CC}}^{},\TT\big)
\end{flalign}
in the $2$-category $\Op$ from the AQFT operad $\O_{\ovr{\CC}}$
to the colored operad associated to $\TT$. More explicitly,
an object in this category is a multifunctor $\AAA : \O_{\ovr{\CC}}^{}\to \TT$
and a morphism between two objects $\AAA,\BBB : \O_{\ovr{\CC}}^{}\to \TT$
is a multinatural transformation $\zeta : \AAA\to \BBB$.
\end{defi}

Observe that the assignment $\ovr{\CC}\mapsto \AQFT(\ovr{\CC})$
admits a canonical upgrade to a $2$-functor
\begin{flalign}\label{eqn:AQFT2functor}
\AQFT\,:\, (\OCat)^{\op}~\longrightarrow~\Cat
\end{flalign} 
to the $2$-category $\Cat$ of (not necessarily small) categories, functors and natural transformations.
Indeed, for each orthogonal functor $F : \ovr{\CC}\to\ovr{\DD}$,
we can define a pullback functor
\begin{flalign}\label{eqn:AQFTpullback}
F^\ast \,:=\,(-)\O_F^{}\,:\, \AQFT(\ovr{\DD})~\longrightarrow~\AQFT(\ovr{\CC})
\end{flalign}
that acts on objects $(\AAA:\O_{\ovr{\DD}}^{}\to \TT) 
\mapsto (\AAA\,\O_{F}^{} : \O_{\ovr{\CC}}^{}\to \TT)$ by pre-composition
and on morphisms $\zeta \mapsto \zeta\,\O_{F}^{}$ by whiskering in the $2$-category $\Op$.
The action of the $2$-functor \eqref{eqn:AQFT2functor} on $2$-morphisms
is given by whiskering too, i.e.\ for each natural transformation
$\chi : F\to G$ between orthogonal functors we define
$\chi^\ast := (-)\O_{\chi}^{} : (-)\O_{F}^{}\to (-)\O_{G}^{}$.
\sk

We conclude this subsection by recording the following standard
result about operadic left Kan extensions, see e.g.
\cite[Theorem 2.11]{BSWoperad} for a spelled out proof.
\begin{propo}\label{prop:LKE}
Suppose that the closed symmetric monoidal category $\TT$ is cocomplete, i.e.\ it admits all small 
colimits. Then, for each orthogonal functor $F : \ovr{\CC}\to\ovr{\DD}$,
the pullback functor in \eqref{eqn:AQFTpullback} admits a left adjoint, i.e.\ we obtain
an adjunction
\begin{flalign}
\xymatrix{
F_! \,:\, \AQFT(\ovr{\CC})\ar@<0.5ex>[r]~&~\ar@<0.5ex>[l]\AQFT(\ovr{\DD})\,:\,F^\ast
}\quad.
\end{flalign}
\end{propo}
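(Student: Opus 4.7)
The statement is a special case of the general existence theorem for left Kan extensions of operad algebras along a morphism of colored operads, applied to the multifunctor $\O_F : \O_{\ovr{\CC}} \to \O_{\ovr{\DD}}$ supplied by Proposition \ref{propo:Operad2functor}. The plan is to construct $F_!$ by adjoint lifting across the monadic forgetful functors to the underlying color-indexed families of objects in $\TT$.

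First, I would consider the commutative square whose top arrow is $F^\ast$, whose bottom arrow is the pullback $F_0^\ast : \TT^{\mathrm{ob}\,\DD} \to \TT^{\mathrm{ob}\,\CC}$ along the induced map $F_0 : \mathrm{ob}\,\CC \to \mathrm{ob}\,\DD$ between the discrete categories of colors, and whose vertical arrows are the forgetful functors $U_{\ovr{\CC}}$ and $U_{\ovr{\DD}}$ to the respective functor categories. Since the color categories are discrete, $F_0^\ast$ admits an obvious pointwise left adjoint $(F_0)_!$ computed as a coproduct over fibers, using only that $\TT$ is cocomplete. Each $U_{\ovr{\bullet}}$ is monadic, with free-algebra monad given by the familiar Schur-type formula
\begin{equation*}
T_{\ovr{\CC}}(S)(N) \,=\, \coprod_{n \geq 0} \bigg( \coprod_{\und{M}} \O_{\ovr{\CC}}\big(\substack{N \\ \und{M}}\big) \otimes \Motimes_{i=1}^n S(M_i) \bigg)_{\Sigma_n} \quad,
\end{equation*}
well defined thanks to the cocomplete symmetric monoidal structure on $\TT$.

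Invoking the adjoint lifting theorem, the existence of $(F_0)_!$ together with reflexive coequalizers in $\AQFT(\ovr{\CC})$ suffices to produce the desired left adjoint $F_!$, which one can then present explicitly as a reflexive coequalizer of free $\O_{\ovr{\DD}}$-algebras, morally the relative composition product $\O_{\ovr{\DD}} \circ_{\O_{\ovr{\CC}}} \AAA$; the adjunction property follows formally from the universal property of coequalizers combined with the free-forget adjunctions. The main technical point is the existence and monad-preservation of reflexive coequalizers in $\AQFT(\ovr{\bullet})$, which is where the \emph{closed} symmetric monoidal hypothesis on $\TT$ enters essentially: closedness forces tensor products to commute with colimits in each variable, so each $T_{\ovr{\bullet}}$ preserves reflexive coequalizers, and hence $\AQFT(\ovr{\bullet})$ inherits them from $\TT^{\mathrm{ob}\,\bullet}$ along the monadic forgetful functor.
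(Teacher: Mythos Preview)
The paper does not actually prove this proposition; it is recorded as a standard fact about operadic left Kan extensions, with a pointer to \cite[Theorem 2.11]{BSWoperad} for a spelled-out argument (and to \cite[Proposition 2.12]{BSWoperad} for an explicit colimit formula). Your sketch via the adjoint lifting theorem across the monadic forgetful functors is the standard route to such existence results and is correct in outline.

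One small slip worth flagging: the adjoint lifting theorem requires reflexive coequalizers in the \emph{domain} of the right adjoint $F^\ast$, i.e.\ in $\AQFT(\ovr{\DD})$, not in $\AQFT(\ovr{\CC})$ as you wrote. Since your final paragraph establishes that $\AQFT(\ovr{\bullet})$ has reflexive coequalizers for any orthogonal category $\ovr{\bullet}$, this is harmless; just correct the reference in the middle paragraph.
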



\subsection{\label{subsec:localization}Time-slice axiom and localizations}
Note that our Definition \ref{def:AQFT} of AQFTs does {\em not} explicitly
refer to one of the central physical axioms, 
namely the time-slice axiom. The goal of this subsection
is to clarify, in the case where the target $\TT$ is an 
ordinary category (in contrast to a model category such as $\TT=\Ch_\bbK$),
the precise sense in which the time-slice axiom may be implemented
by a localization of orthogonal categories (see Corollary \ref{cor:TimeSlice via LocalizationOfOrtCat}). 
\sk

Let us start by making precise the concept of 
localization of orthogonal categories, which has previously appeared
in a more {\em ad hoc} fashion in \cite[Section 4.2]{BSWoperad}.
Similarly to ordinary category theory, localizations in $\OCat$
are characterized by the following universal property.
\begin{defi}\label{def:OCatlocalization}
Let $\ovr{\CC}$ be an orthogonal category
and $W\subseteq \Mor\,\CC$ a subset. A {\em localization} of $\ovr{\CC}$ at
$W$ is an orthogonal category $\ovr{\CC}[W^{-1}]$ together with
an orthogonal functor $L:\ovr{\CC}\to\ovr{\CC}[W^{-1}]$ satisfying the following properties:
\begin{itemize}
\item[(i)] For all morphisms $(f:M\to N) \in W$, the morphism
$L(f) : L(M)\to L(N)$ is an isomorphism in  $\ovr{\CC}[W^{-1}]$.

\item[(ii)] For any orthogonal category $\ovr{\DD}$ and any orthogonal functor
$F : \ovr{\CC}\to\ovr{\DD}$ that sends morphisms in $W$ to isomorphisms in $\ovr{\DD}$,
there exists an orthogonal functor $F_W^{} : \ovr{\CC}[W^{-1}]\to \ovr{\DD}$
and a natural isomorphism $F\cong F_W^{}\,L$.

\item[(iii)] For all orthogonal functors $G,H:\ovr{\CC}[W^{-1}]\to \ovr{\DD}$,
the whiskering map
\begin{flalign}
(-)L\,:\,\mathrm{Nat}(G,H)~\longrightarrow~\mathrm{Nat}(G L,H L)
\end{flalign}
between the sets of $2$-morphisms (i.e.\ natural transformations) is a bijection.
\end{itemize}
\end{defi}
\begin{rem}\label{rem:OCatlocalization}
Denoting by $\Hom_{\OCat}^{}\big(\ovr{\CC},\ovr{\DD}\big)\in\Cat$ the $\Hom$-category between
two objects $\ovr{\CC}$ and $\ovr{\DD}$ in the $2$-category $\OCat$ of orthogonal categories, 
the properties from Definition \ref{def:OCatlocalization} can be rephrased 
more concisely as follows: For every orthogonal category $\ovr{\DD}$, the functor
\begin{subequations}
\begin{flalign}
(-)L\,:\, \Hom_{\OCat}^{}\big(\ovr{\CC}[W^{-1}],\ovr{\DD}\big)~\longrightarrow~\Hom_{\OCat}^{}\big(\ovr{\CC},\ovr{\DD}\big)^W
\end{flalign}
is an equivalence of categories, where 
\begin{flalign}
\Hom_{\OCat}^{}\big(\ovr{\CC},\ovr{\DD}\big)^W\,\subseteq \,
\Hom_{\OCat}^{}\big(\ovr{\CC},\ovr{\DD}\big)
\end{flalign}
\end{subequations}
denotes the full subcategory of orthogonal functors that send $W$ to isomorphisms.
\end{rem}

It is easy to prove that the {\em ad hoc} concept 
of localization of orthogonal categories from \cite[Section 4.2]{BSWoperad}
defines a localization in the sense of Definition \ref{def:OCatlocalization}.
\begin{propo}\label{prop:Olocalizationmodel}
Let $\ovr{\CC} = (\CC,\perp_{\CC}^{})$ be an orthogonal category
and $W\subseteq \Mor\,\CC$ a subset.
Consider the localization $L : \CC\to \CC[W^{-1}]$
of the underlying category $\CC$ at $W$ and endow $\CC[W^{-1}]$ with the pushforward
orthogonality relation $\perp_{\CC[W^{-1}]} \,:= L_\ast(\perp_{\CC}^{})$ (see \cite[Lemma 3.19]{BSWoperad}),
i.e.\ the minimal orthogonality relation such that $L : \ovr{\CC}\to \ovr{\CC[W^{-1}]}$ is 
an orthogonal functor. Then the resulting orthogonal functor $L : \ovr{\CC}\to \ovr{\CC[W^{-1}]}$
is a localization in the sense of Definition \ref{def:OCatlocalization}.
\end{propo}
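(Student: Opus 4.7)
The strategy is to bootstrap the three orthogonal-categorical properties of $L : \ovr{\CC}\to \ovr{\CC[W^{-1}]}$ from the classical $1$-categorical universal property of $L : \CC\to \CC[W^{-1}]$, using the minimality of the pushforward orthogonality $\perp_{\CC[W^{-1}]}\, = L_{\ast}(\perp_{\CC}^{})$ to carry the orthogonality condition across the arguments.

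Property (i) is immediate: by construction $L$ is the ordinary localization of $\CC$ at $W$, hence sends every morphism in $W$ to an isomorphism in $\CC[W^{-1}]$, and orthogonality plays no role. For property (ii), given an orthogonal functor $F:\ovr{\CC}\to\ovr{\DD}$ inverting $W$, the classical universal property of $L$ produces a functor $F_W : \CC[W^{-1}]\to \DD$ together with a natural isomorphism $\eta : F\cong F_W\,L$ on underlying categories. The only nontrivial point is to verify that $F_W$ is orthogonal with respect to $L_{\ast}(\perp_{\CC}^{})$. Recall that the pushforward is characterized as the smallest orthogonality relation on $\CC[W^{-1}]$ containing the generating pairs $\{(L(f_1),L(f_2)) : f_1\perp_{\CC}^{} f_2\}$. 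Since the pullback $F_W^{\ast}(\perp_{\DD}^{})$ is automatically an orthogonality relation, minimality reduces the check to these generators; but $(F_W L(f_1), F_W L(f_2))$ is conjugate through the components of $\eta$ to $(F(f_1),F(f_2))\in\,\perp_{\DD}^{}$, and composition stability of $\perp_{\DD}^{}$ then yields $(F_W L(f_1), F_W L(f_2))\in\,\perp_{\DD}^{}$, as required.

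For property (iii), I would appeal to the fact that $L : \CC\to \CC[W^{-1}]$, being the classical $1$-categorical localization, has the property that whiskering by $L$ induces a bijection $\mathrm{Nat}(G,H)\to \mathrm{Nat}(G L, H L)$ for all pairs of functors $G,H:\CC[W^{-1}]\to\DD$. Since the $2$-morphisms in $\OCat$ are just the underlying natural transformations, with no additional constraint compatible with the orthogonality data, exactly the same bijection holds for orthogonal $G,H$ mapping to $\ovr{\DD}$. This takes care of (iii) without any further work.

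The expected main obstacle is the orthogonality check in property (ii), i.e.\ ensuring that $F_W$ preserves the pushforward relation. This is the only step where one genuinely uses the interplay between the $1$-categorical universal property of $L$ and the minimality built into the definition of $L_{\ast}(\perp_{\CC}^{})$. Everything else amounts to rephrasing facts about the classical localization, so once the generator-based argument is clean, the proposition follows.
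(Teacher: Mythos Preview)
Your proposal is correct and follows essentially the same approach as the paper: items (i) and (iii) are reduced to the classical $1$-categorical localization, while for (ii) you obtain $F_W$ from the underlying universal property and then verify orthogonality by checking it on the generating pairs $(L(f_1),L(f_2))$ via the natural isomorphism $\eta$ and composition stability of $\perp_{\DD}^{}$. The paper spells out the conjugation step a bit more explicitly, writing $F_W L(f_i) = \chi_N\,F(f_i)\,\chi_{M_i}^{-1}$, but the argument is the same.
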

\begin{proof}
Items (i) and (iii) hold true because the underlying functor $L : \CC\to\CC[W^{-1}]$ is a localization of categories.
For item (ii), we obtain at the level of the underlying categories
a functor $F_W^{} : \CC[W^{-1}]\to \DD$ and a natural isomorphism $F\cong F_W^{}\,L$.
It remains to show that $F_W^{}$ is orthogonal with respect to 
$\perp_{\CC[W^{-1}]}^{} \,= L_\ast(\perp_{\CC}^{})$. By composition stability of $\perp_{\DD}^{}$ and 
the definition of the pushforward orthogonality relation (see \cite[Lemma 3.19]{BSWoperad}), this
is the case if and only if $F_W^{} L(f_1) \perp_{\DD}^{} F_W^{}L(f_2)$, for all $f_1\perp_{\CC}^{}f_2$. 
The latter holds true because $F\cong F_W^{}\,L$ are naturally isomorphic
and $\perp_{\DD}^{}$ is composition stable. Let us spell out this last 
step in more detail: Writing $(f_1 : M_1\to N)\perp_{\CC}^{}(f_2 : M_2\to N)$ and $\chi : F \to F_W^{}\,L$
for the given natural isomorphism, we obtain that
\begin{flalign}
\big(F_W^{} L(f_1),F_W^{} L(f_2)\big)\,=\, \big(\chi_{N}^{}\,F(f_1)\,\chi_{M_1}^{-1} , \chi_{N}^{}\, F(f_2)\,\chi_{M_2}^{-1}\big)\,\in\,\perp_{\DD}^{}
\end{flalign}
because $\perp_{\DD}^{}$ is composition stable and $F(f_1)\perp_{\DD}^{}F(f_2)$ 
as a consequence of $F$ being an orthogonal functor.
\end{proof}

We shall now prove that the $2$-functor $\O: \OCat\to \Op$
from Proposition \ref{propo:Operad2functor} that assigns to an orthogonal category
$\ovr{\CC}$ its AQFT operad $\O_{\ovr{\CC}}^{}$ maps localizations of orthogonal
categories to localizations of operads. The latter are characterized by
the following universal property.
\begin{defi}\label{def:operadlocalization}
Let $\P$ be a colored operad and $W\subseteq \Mor_1^{}\P$ a subset of $1$-ary operations in $\P$.
A {\em localization} of $\P$ at $W$ is a colored operad $\P[W^{-1}]$ together with a multifunctor
$L : \P\to \P[W^{-1}]$ satisfying the following properties:
\begin{itemize}
\item[(i)] For all operations $(\phi : c\to t)\in W$, the $1$-ary operation
$L(\phi) : L(c)\to L(t)$ is an isomorphism in $\P[W^{-1}]$.

\item[(ii)] For any colored operad $\Q$ and any multifunctor $F : \P\to \Q$ that sends operations
in $W$ to isomorphisms in $\Q$, there exists a multifunctor $F_W^{} : \P[W^{-1}]\to \Q$ and a multinatural
isomorphism $F\cong F_W^{}\,L$.

\item[(iii)] For all multifunctors $G,H: \P[W^{-1}]\to \Q$, the whiskering map
\begin{flalign}
(-)L\,:\,\mathrm{mNat}(G,H)~\longrightarrow~\mathrm{mNat}(G L,H L)
\end{flalign}
between the sets of $2$-morphisms (i.e.\ multinatural transformations) is a bijection.
\end{itemize}
\end{defi}
\begin{rem}\label{rem:Oplocalization}
Similarly to Remark \ref{rem:OCatlocalization}, the properties from Definition
\ref{def:operadlocalization} can be rephrased more concisely as follows:
For every colored operad $\Q$, the functor
\begin{subequations}
\begin{flalign}
(-)L\,:\, \Hom_{\Op}^{}\big(\P[W^{-1}],\Q\big)~\longrightarrow~\Hom_{\Op}^{}\big(\P,\Q\big)^W
\end{flalign}
is an equivalence of categories, where 
\begin{flalign}
\Hom_{\Op}^{}\big(\P,\Q\big)^W\, \subseteq\, \Hom_{\Op}^{}\big(\P,\Q\big)
\end{flalign} 
\end{subequations}
denotes the full subcategory of multifunctors 
that send $W$ to isomorphisms.
\end{rem}

\begin{propo}\label{prop:Operad2functorlocalization}
Suppose that $L : \ovr{\CC}\to\ovr{\CC}[W^{-1}]$ is a localization of orthogonal categories.
Applying the $2$-functor from Proposition \ref{propo:Operad2functor} defines
a multifunctor $\O_{L}^{} : \O_{\ovr{\CC}}^{} \to \O_{\ovr{\CC}[W^{-1}]}^{}$ that exhibits a
localization of the colored operad $\O_{\ovr{\CC}}^{}$ at the set of $1$-ary operations
$W\subseteq \Mor_1^{} \O_{\ovr{\CC}}^{}$.
\end{propo}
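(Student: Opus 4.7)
The plan is to verify the three conditions of Definition \ref{def:operadlocalization} directly for $\O_L : \O_{\ovr{\CC}} \to \O_{\ovr{\CC}[W^{-1}]}$. Condition (i) is immediate from formula \eqref{eqn:OFformula}: viewing $(f : M \to N) \in W$ as the $1$-ary operation $[e, f]$ in $\O_{\ovr{\CC}}$, one has $\O_L([e, f]) = [e, L(f)]$, which is invertible in $\O_{\ovr{\CC}[W^{-1}]}$ precisely because $L(f)$ is an isomorphism in $\CC[W^{-1}]$.

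For condition (ii), I would exploit the generators-and-relations presentation of AQFT operads recalled in Remark \ref{rem:genrel}. Given a multifunctor $F : \O_{\ovr{\CC}} \to \Q$ sending $W$ to isomorphisms, the candidate $F_W : \O_{\ovr{\CC}[W^{-1}]} \to \Q$ is built as follows: on objects and on the unit and multiplication generators $1_N^{}, \mu_N^{}$, define $F_W$ to coincide with $F$ (which is legitimate since the objects of $\CC[W^{-1}]$ agree with those of $\CC$ by Proposition \ref{prop:Olocalizationmodel}); on the morphism generators, use the universal property of the underlying categorical localization $L : \CC \to \CC[W^{-1}]$ applied to the $\CC$-functor that $F$ determines at the level of $1$-ary operations. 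The subtle step is then to verify the $\perp_{\CC[W^{-1}]}^{}$-commutativity relations. Since $\perp_{\CC[W^{-1}]}^{} = L_\ast(\perp_\CC^{})$ consists, by the characterization of the pushforward orthogonality, of pairs of the form $(k\,L(f_1)\,h_1, k\,L(f_2)\,h_2)$ obtained from $f_1 \perp_\CC^{} f_2$ by composition stability, a short operadic calculation reduces the desired commutation in $\Q$ to the commutation of $F(f_1), F(f_2)$ in $F(N)$ (guaranteed because $F$ respects $\perp_\CC^{}$) combined with the algebra-homomorphism property of $F_W(k)$ and $F_W(h_i)$. This is the main obstacle of the proof: because $\perp_{\CC[W^{-1}]}^{}$ is strictly larger than the image of $\perp_\CC^{}$, the commutation has to be derived by a composition-stability argument rather than merely transported.

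Condition (iii) then reduces to the corresponding bijection for the underlying categorical localization. A multinatural transformation between multifunctors $G, H : \O_{\ovr{\CC}[W^{-1}]} \to \Q$ amounts to a family of $1$-ary components $\eta_M : G(M) \to H(M)$ satisfying naturality with respect to all generators. The conditions on the unit and multiplication generators are object-wise in $\CC$ and hence unaffected by whiskering by $L$, whereas naturality with respect to morphism generators in $\CC[W^{-1}]$ corresponds, via whiskering by $L : \CC \to \CC[W^{-1}]$, bijectively to naturality with respect to morphism generators in $\CC$. This yields the required bijection between $\mathrm{mNat}(G, H)$ and $\mathrm{mNat}(G\,\O_L, H\,\O_L)$, completing the verification.
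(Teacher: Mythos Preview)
Your proof is correct and follows essentially the same approach as the paper: both verify the three conditions of Definition \ref{def:operadlocalization} via the generators-and-relations presentation of Remark \ref{rem:genrel}, working with the concrete model of Proposition \ref{prop:Olocalizationmodel} (in which objects agree and $L$ is a strict localization). The paper phrases item (ii) slightly differently---it works explicitly with the Gabriel-Zisman formal inverses, setting $F_W(w^{-1}) := F(w)^{-1}$, and simply asserts that compatibility with the relations of $\O_{\ovr{\CC}[W^{-1}]}$ is ``easy to check''---whereas you invoke the universal property of the underlying categorical localization and spell out more carefully why the $\perp_{\CC[W^{-1}]}$-commutativity relations for the enlarged pushforward orthogonality are preserved.
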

\begin{proof}
Because of $2$-functoriality and uniqueness (up to equivalence) of localizations, we can pick 
without loss of generality a particular model for the underlying localized category $\CC[W^{-1}]$.
To allow for an effective use of the generators-relations description of the AQFT operads
from Remark \ref{rem:genrel}, we work with the Gabriel-Zisman model \cite{GabrielZisman}
in which $\CC[W^{-1}]$ has the same objects as $\CC$ and its morphisms are given by 
equivalence classes of chains of zig-zags of $\CC$-morphisms, where all reverse-pointing morphisms must be in $W$.
In short, the category $\CC[W^{-1}]$ is generated by all $\CC$-morphisms $f: M\to N$
and formal inverses $w^{-1} : M \to N$ of all $W$-morphisms $w : N\to M$ modulo
the equivalence relation described in \cite{GabrielZisman}. The localization functor
$L : \CC\to \CC[W^{-1}]$ then acts on objects and morphisms simply as $M\mapsto M$ and
$(f:M\to N)\mapsto (f:M\to N)$.
\sk

We shall now explicitly verify that $\O_{L}^{}: \O_{\ovr{\CC}}^{} \to \O_{\ovr{\CC}[W^{-1}]}^{}$
satisfies the three properties from Definition \ref{def:operadlocalization} that characterize a localization of colored operads.
Item (i) is obvious from the definition of $\O_{L}^{}$ in \eqref{eqn:OFformula}. 
For item (ii), consider any multifunctor $F : \O_{\ovr{\CC}}^{} \to \Q$ 
that sends operations in $W$ to isomorphisms. We have to provide an extension of $F$ along $\O_{L}^{}$ 
to a multifunctor $F_W^{} : \O_{\ovr{\CC}[W^{-1}]}^{}\to \Q$. Using the Gabriel-Zisman model
for $\CC[W^{-1}]$ and our generators-relations description from Remark \ref{rem:genrel}, we can construct a strict
extension (i.e.\ the multinatural isomorphism $F \cong F_W^{}\,\O_{L}^{}$ is the identity) by mapping
the formal inverses of $W$-morphisms as follows
\begin{flalign}
\big(w^{-1} : M\to N\big)~\longmapsto~ \big( F(w)^{-1} : F(M)\to F(N)\big)\quad.
\end{flalign}
It is easy to check that this is compatible with the relations of $\O_{\ovr{\CC}[W^{-1}]}^{}$, hence we
obtain a multifunctor $F_W^{} : \O_{\ovr{\CC}[W^{-1}]}^{}\to \Q$ such that $F=F_W^{}\,\O_{L}^{}$.
\sk

For item (iii), injectivity is obvious and surjectivity is shown by the 
following argument: Given any multinatural transformation 
$\zeta : G\,\O_{L}^{}\to H\,\O_{L}^{}$ of multifunctors from $\O_{\ovr{\CC}}^{}$
to $\Q$, we have to prove that the underlying components
$\big\{\zeta_{M}^{} : G(M)\to H(M)\big\}_{M\in\CC}$ are 
multinatural on $\O_{\ovr{\CC}[W^{-1}]}^{}$ too. Since multinaturality
can be checked at the level of the generators, this amounts to showing
that for each formal inverse $w^{-1} : M\to N$ the naturality diagram
\begin{flalign}
\xymatrix@C=3em{
\ar[d]_-{\zeta_M^{}}G(M) \ar[r]^-{G(w^{-1})} & G(N)\ar[d]^-{\zeta_N^{}}\\
H(M) \ar[r]_-{H(w^{-1})} & H(N)
}
\end{flalign}
in $\Q$ commutes. Since $G(w^{-1})=G(w)^{-1}$ and $H(w^{-1}) = H(w)^{-1}$
in $\Q$, these naturality conditions are already enforced by the arity $1$ generators of $\O_{\ovr{\CC}}^{}$.
\end{proof}

The relevance of this result for AQFT can be summarized as follows.
\begin{cor}\label{cor:TimeSlice via LocalizationOfOrtCat}
Let $\ovr{\CC}$ be an orthogonal category and $W\subseteq \Mor\,\CC$ a subset. Denote by
\begin{flalign}
\AQFT(\ovr{\CC})^W\,\subseteq \, \AQFT(\ovr{\CC})
\end{flalign}
the full subcategory of AQFTs on $\ovr{\CC}$ that send $W$ to isomorphisms in $\TT$.
In AQFT terminology, one may say that such theories satisfy the {\em time-slice axiom}
for the set of morphisms $W$.
Then the pullback functor
\begin{flalign}
L^\ast\,:=\,(-)\O_{L}^{}\,:\, \AQFT\big(\ovr{\CC}[W^{-1}]\big)~\longrightarrow~\AQFT(\ovr{\CC})^W
\end{flalign}
associated to the localization $L : \ovr{\CC}\to \ovr{\CC}[W^{-1}]$ of orthogonal categories
defines an equivalence between the category of AQFTs on $\ovr{\CC}[W^{-1}]$ and the category
of AQFTs on $\ovr{\CC}$ that satisfy the time-slice axiom for $W$.
\end{cor}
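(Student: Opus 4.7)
The plan is to deduce the corollary as a direct consequence of Proposition \ref{prop:Operad2functorlocalization} combined with the universal property of operadic localizations from Remark \ref{rem:Oplocalization}. First, I would invoke Proposition \ref{prop:Operad2functorlocalization} to conclude that the multifunctor $\O_L : \O_{\ovr{\CC}} \to \O_{\ovr{\CC}[W^{-1}]}$ obtained by applying the $2$-functor of Proposition \ref{propo:Operad2functor} exhibits $\O_{\ovr{\CC}[W^{-1}]}$ as a localization of the colored operad $\O_{\ovr{\CC}}$ at the set of $1$-ary operations $W \subseteq \Mor_1 \O_{\ovr{\CC}}$, where a morphism $w : M\to N$ in $\CC$ is identified with the operation $[e,w] : M\to N$ in $\O_{\ovr{\CC}}$.

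Second, I would apply the universal property in the form recorded in Remark \ref{rem:Oplocalization} with target given by the colored operad associated to the closed symmetric monoidal category $\TT$. This yields that the whiskering functor
\begin{equation*}
(-)\,\O_L \,:\, \Hom_{\Op}^{}\big(\O_{\ovr{\CC}[W^{-1}]}^{},\TT\big)~\longrightarrow~\Hom_{\Op}^{}\big(\O_{\ovr{\CC}}^{},\TT\big)^W
\end{equation*}
is an equivalence of categories.

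Finally, I would rewrite both sides using Definition \ref{def:AQFT}. The domain is by definition $\AQFT(\ovr{\CC}[W^{-1}])$, and the whiskering functor coincides with the pullback functor $L^\ast = (-)\,\O_L$ from \eqref{eqn:AQFTpullback}. For the codomain, I would observe that a multifunctor $\AAA : \O_{\ovr{\CC}}^{} \to \TT$ sends the $1$-ary operation $[e,w]$ to an isomorphism in $\TT$ if and only if the underlying functor $\AAA : \CC \to \TT$ sends $w$ to an isomorphism, which is tautological since these $1$-ary operations are precisely the images of $\CC$-morphisms under the inclusion of generators described in Remark \ref{rem:genrel}. Hence $\Hom_{\Op}^{}\big(\O_{\ovr{\CC}}^{},\TT\big)^W = \AQFT(\ovr{\CC})^W$, and the claim follows.

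I do not expect any genuine obstacle: the entire argument consists of transporting the universal property of the operadic localization $\O_L$ through the definitional identifications above. The only minor bookkeeping is the verification that the two notions of ``sending $W$ to isomorphisms'' (at the level of $\CC$-morphisms versus at the level of $1$-ary operations in $\O_{\ovr{\CC}}$) agree, which is immediate from the generators-relations description.
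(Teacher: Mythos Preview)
Your proposal is correct and follows exactly the paper's own approach: invoke Proposition \ref{prop:Operad2functorlocalization} to identify $\O_L$ as an operadic localization at $W$, then apply the universal property of Remark \ref{rem:Oplocalization} with $\Q=\TT$ and unwind the definitions. The paper's proof is essentially a one-sentence version of what you wrote.
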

\begin{proof}
By Proposition \ref{prop:Operad2functorlocalization}, the multifunctor 
$\O_{L}^{} : \O_{\ovr{\CC}}^{}\to\O_{\ovr{\CC}[W^{-1}]}$ is a localization of the colored operad
$\O_{\ovr{\CC}}^{}$ at $W$, hence the statement is a direct consequence of the universal property
from Remark \ref{rem:Oplocalization}.
\end{proof}
\begin{rem}
For a cocomplete closed symmetric monoidal category $\TT$, 
Corollary \ref{cor:TimeSlice via LocalizationOfOrtCat} can be rephrased in the language of
Proposition \ref{prop:LKE}. This will be useful to 
understand our strategy and constructions in Sections 
\ref{sec:reflectivestrictification} and \ref{sec:emptystrictification}.
The orthogonal localization functor $L : \ovr{\CC}\to \ovr{\CC}[W^{-1}]$
defines by Proposition \ref{prop:LKE} an adjunction
\begin{flalign}\label{eqn:Ladjunction}
\xymatrix{
L_! \,:\, \AQFT(\ovr{\CC})\ar@<0.5ex>[r]~&~\ar@<0.5ex>[l]\AQFT\big(\ovr{\CC}[W^{-1}]\big)\,:\,L^\ast
}\quad,
\end{flalign}
whose right adjoint is the pullback functor $ L^\ast =(-)\O_{L}^{}$ and whose
left adjoint $L_!$ is given by operadic left Kan extension
(see e.g.\ \cite[Proposition 2.12]{BSWoperad} for an explicit colimit formula).
Using this adjunction, we can associate to any $\AAA\in \AQFT(\ovr{\CC})$
the AQFT $L^\ast L_!(\AAA)\in\AQFT(\ovr{\CC})^W $ that satisfies the time-slice axiom,
whether or not the original $\AAA$ satisfies this axiom,
together with a comparison morphism $\eta_{\AAA}^{} : \AAA\to L^\ast L_!(\AAA)$ 
given by the adjunction unit $\eta: \id\to L^\ast\,L_!$. In general, the comparison morphism
will \textit{not} be an isomorphism, hence $\AAA$ and $L^\ast L_!(\AAA)$ define different AQFTs.
The result in Corollary \ref{cor:TimeSlice via LocalizationOfOrtCat} implies
that the adjunction \eqref{eqn:Ladjunction} induces an adjoint equivalence
\begin{flalign}
\xymatrix{
L_! \,:\, \AQFT\big(\ovr{\CC})^W\ar@<0.75ex>[r]_-{\sim}~&~\ar@<0.75ex>[l]\AQFT\big(\ovr{\CC}[W^{-1}]\big)\,:\,L^\ast
}
\end{flalign}
when we restrict to the full subcategory $\AQFT(\ovr{\CC})^W\subseteq \AQFT(\ovr{\CC})$
of AQFTs that satisfy the time-slice axiom. As a consequence,
the comparison morphism $\eta_{\AAA}^{} : \AAA\to L^\ast L_!(\AAA)$ is an isomorphism
if and only if the theory $\AAA\in\AQFT(\ovr{\CC})^W$  satisfies the time-slice axiom.
\end{rem}

\subsection{\label{subsec:modelstructures}Model category structures}
For the rest of this paper, we fix the target category
\begin{flalign}
\TT\,:=\, \Ch_\bbK
\end{flalign}
to be the closed symmetric monoidal category of (possibly unbounded) chain 
complexes of vector spaces over a field $\bbK$ of characteristic $0$. 
By \cite[Sections 2.3 and 4.2]{Hovey}, $\TT=\Ch_\bbK$  carries the structure
of a {\em closed symmetric monoidal model category} in which the weak equivalences
are the quasi-isomorphisms and the fibrations are the degree-wise surjective chain maps.
As a consequence of Hinich's results \cite{Hinich1,Hinich2}, this model structure can be transferred
to the category of $\Ch_{\bbK}$-valued AQFTs on an orthogonal category $\ovr{\CC}$.
The following model categories appeared first in \cite{BSWhomotopy}.
\begin{theo}\label{theo:projmodel}
For each orthogonal category $\ovr{\CC}$, the associated category
\begin{flalign}
\AQFT(\ovr{\CC})\,:=\,\Alg_{\O_{\ovr{\CC}}^{}}^{}\big(\Ch_\bbK\big)\,:=
\,\Hom_{\Op}^{}\big(\O_{\ovr{\CC}}^{},\Ch_\bbK\big)
\end{flalign}
of $\Ch_\bbK$-valued AQFTs carries
a model structure in which a morphism $\zeta : \AAA\to\BBB$ 
is a weak equivalence (respectively, fibration) if each component 
$\zeta_M^{} : \AAA(M) \to\BBB(M)$ is a quasi-isomorphism (respectively, degree-wise surjective).
We call this the {\em projective model structure} on $\AQFT(\ovr{\CC})$.
\end{theo}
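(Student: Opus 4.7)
The plan is to apply a standard transfer theorem for model structures along the free-forgetful adjunction between $\AQFT(\ovr{\CC})$ and the category $\Ch_\bbK^{\mathrm{ob}\,\CC}$ of $\mathrm{ob}\,\CC$-indexed families of chain complexes. First I would identify this adjunction: since $\O_{\ovr{\CC}}^{}$ is a colored operad whose set of colors is $\mathrm{ob}\,\CC$, pulling back along the inclusion of the trivial (colored) unit operad exhibits $\AQFT(\ovr{\CC}) = \Alg_{\O_{\ovr{\CC}}^{}}^{}(\Ch_\bbK)$ as the category of algebras over a monad on $\Ch_\bbK^{\mathrm{ob}\,\CC}$, with free functor $F$ and forgetful functor $U$. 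The source of the adjunction is locally presentable, and $U$ preserves sifted colimits (in particular filtered colimits and reflexive coequalizers), which is the standard smallness input for the small object argument.

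Next I would equip $\Ch_\bbK^{\mathrm{ob}\,\CC}$ with the componentwise projective model structure, i.e.\ the product of copies of the cofibrantly generated projective model structure on $\Ch_\bbK$ from \cite{Hovey}. Its generating (acyclic) cofibrations are the obvious colorwise families. Following Kan's transfer principle, I would then declare a morphism $\zeta : \AAA \to \BBB$ in $\AQFT(\ovr{\CC})$ to be a weak equivalence (resp.\ fibration) precisely when $U\zeta$ is one in $\Ch_\bbK^{\mathrm{ob}\,\CC}$, and take as generating (acyclic) cofibrations the images under $F$ of those of $\Ch_\bbK^{\mathrm{ob}\,\CC}$. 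The two nontrivial hypotheses of the transfer principle that must be verified are (a) local presentability of $\AQFT(\ovr{\CC})$, which follows from the general theory of algebras over a finitary monad on a locally presentable category, and (b) the acyclicity condition: every relative cell complex built from the transferred generating acyclic cofibrations should be forgotten to a weak equivalence in $\Ch_\bbK^{\mathrm{ob}\,\CC}$.

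The main obstacle, as always with transferred model structures on operad algebras, is the acyclicity condition (b). The key input here is Hinich's observation \cite{Hinich1,Hinich2} that over a field $\bbK$ of characteristic $0$ the symmetric group actions that appear in free operadic extensions are rationally free, so the norm maps from coinvariants to invariants are isomorphisms and the relevant $\Sigma_n$-coinvariants preserve quasi\-/isomorphisms degreewise. Concretely, I would analyze a pushout of the form
\begin{equation*}
\xymatrix@C=2.5em{
F(K) \ar[r] \ar[d]_-{F(j)} & \AAA \ar[d] \\
F(L) \ar[r] & \AAA'
}
\end{equation*}
in $\AQFT(\ovr{\CC})$ along a generating acyclic cofibration $j$, filter the underlying map $U\AAA \to U\AAA'$ by the standard operadic filtration whose associated graded is built from tensor products $\O_{\ovr{\CC}}^{}(n) \otimes_{\Sigma_n} X^{\otimes n}$ (for appropriate $X$ depending on the cofiber of $j$), and use characteristic $0$ to conclude that each filtration step is a levelwise acyclic cofibration of chain complexes. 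Transfinite composition then yields a weak equivalence, which delivers (b).

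Finally, once (a) and (b) are established, the transfer principle produces the model structure on $\AQFT(\ovr{\CC})$ with weak equivalences and fibrations as stated, and cofibrant generation is inherited automatically from $\Ch_\bbK^{\mathrm{ob}\,\CC}$. I would close by remarking that this is precisely the projective model structure introduced in \cite{BSWhomotopy}, to which one may refer for the detailed verification of the filtration argument underlying step (b).
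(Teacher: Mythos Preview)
Your proposal is correct and follows exactly the approach the paper indicates: the paper does not spell out a proof but simply attributes the result to Hinich's transfer theorems \cite{Hinich1,Hinich2} and cites \cite{BSWhomotopy} for the first appearance of these model categories. Your sketch unpacks precisely this transfer argument (free--forgetful adjunction, cofibrant generation on $\Ch_\bbK^{\mathrm{ob}\,\CC}$, and the characteristic-$0$ acyclicity verification), so there is nothing to add.
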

\begin{rem}
Note that every object in $\AQFT(\ovr{\CC})$ is a fibrant object. 
We will frequently use this fact without further emphasis in our arguments below. 
This fact is not essential, but it simplifies some arguments.
\end{rem}

An immediate but fundamental consequence of such model structures 
is given by the following result, which encodes numerous universal 
constructions among AQFTs.
\begin{propo}\label{prop:Quillenprojective}
For each orthogonal functor $F : \ovr{\CC}\to \ovr{\DD}$, the adjunction 
from Proposition \ref{prop:LKE}, i.e.\
$F_! : \AQFT(\ovr{\CC}) \rightleftarrows \AQFT(\ovr{\DD}): F^\ast$,
is a Quillen adjunction for the projective model structures from Theorem \ref{theo:projmodel}.
\end{propo}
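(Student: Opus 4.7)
The plan is to use the standard characterization that $(F_!,F^\ast)$ is a Quillen adjunction if and only if the right adjoint $F^\ast$ preserves fibrations and acyclic fibrations. Since the projective model structure on $\AQFT(\ovr{\CC})$ from Theorem \ref{theo:projmodel} has both classes defined component-wise (degree-wise surjections and quasi-isomorphisms respectively), and the pullback functor is simply pre-composition, the verification should reduce to an elementary observation about the values of $F^\ast\zeta$.

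First I would unpack how $F^\ast = (-)\,\O_F^{}$ acts on morphisms. Given $\zeta : \AAA\to \BBB$ in $\AQFT(\ovr{\DD})$, whiskering with $\O_F^{}$ produces a multinatural transformation $F^\ast\zeta : F^\ast \AAA\to F^\ast\BBB$ whose components are
\begin{flalign}
(F^\ast\zeta)_M^{}\,=\,\zeta_{F(M)}^{}\,:\,\AAA(F(M))~\longrightarrow~\BBB(F(M))\quad,
\end{flalign}
for every $M\in\CC$. This is immediate from the explicit formula for $\O_F^{}$ in \eqref{eqn:OFformula}, which acts on objects as $M\mapsto F(M)$, together with the definition of whiskering in $\Op$.

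Next I would argue that $F^\ast$ preserves the two classes. If $\zeta$ is a fibration in $\AQFT(\ovr{\DD})$, then $\zeta_N^{}$ is a degree-wise surjection for every $N\in\DD$, hence $\zeta_{F(M)}^{}$ is a degree-wise surjection for every $M\in\CC$, proving that $F^\ast\zeta$ is a fibration in $\AQFT(\ovr{\CC})$. The same argument with ``degree-wise surjection'' replaced by ``degree-wise surjective quasi-isomorphism'' handles acyclic fibrations. This suffices for the Quillen adjunction claim.

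There is really no obstacle here: the result is essentially formal once the projective model structures on source and target are transferred from the common target model category $\Ch_\bbK$ via the component-wise characterization of weak equivalences and fibrations, and the right adjoint is defined by pre-composition. The analogous statement for any change of enriching operad with target a cofibrantly generated transferred model structure is standard, and the present case is just an instance of this pattern.
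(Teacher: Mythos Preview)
Your proof is correct and follows essentially the same approach as the paper: both argue that the right adjoint $F^\ast$, being pre-composition with $\O_F^{}$, has components $(F^\ast\zeta)_M^{} = \zeta_{F(M)}^{}$ and therefore preserves the component-wise defined classes. The paper phrases it as $F^\ast$ preserving fibrations and weak equivalences (rather than fibrations and acyclic fibrations), but this is the same elementary observation and leads to the same conclusion.
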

\begin{proof}
The right adjoint $F^\ast = (-)\O_{F}^{}$ is a pullback functor, hence it preserves both the fibrations
and the weak equivalences as these are defined component-wise. This in particular implies that 
$F^\ast$ is a right Quillen functor.
\end{proof}

The implementation of the time-slice axiom in this homotopical context 
is more subtle than in the case of an ordinary target category $\TT$ that 
we have discussed in Subsection \ref{subsec:localization}.
Demanding that a multifunctor $\AAA : \O_{\ovr{\CC}}^{}\to \Ch_\bbK$
sends a subset $W\subseteq \Mor\,\CC$ to {\em isomorphisms} is neither
realized in concrete examples (see e.g.\ the linear quantum gauge theories constructed in \cite{LinearYM,BMS})
nor is it compatible with weak equivalences: Indeed, given a weak equivalence
$\zeta :\AAA\to \BBB$ in the model category $\AQFT(\ovr{\CC})$, it is 
in general {\em not} true that $\AAA$ satisfies this property if and only if $\BBB$ does.
The homotopically correct generalization of the time-slice axiom, which is stable under
weak equivalences of AQFTs, is given
by replacing the concept of isomorphisms with that of quasi-isomorphisms.
\begin{defi}\label{def:homotopytimeslice}
Let $\ovr{\CC}$ be an orthogonal category and $W\subseteq \Mor\,\CC$ a subset.
An object $\AAA\in \AQFT(\ovr{\CC})$ is said to satisfy the {\em homotopy time-slice
axiom} for $W$ if, as a multifunctor $\AAA : \O_{\ovr{\CC}}^{}\to \Ch_\bbK$, it sends
$W$ to quasi-isomorphisms. 
\end{defi}

It is important to stress that this definition introduces the
homotopy time-slice axiom only at the level of the objects in the 
model category $\AQFT(\ovr{\CC})$, but it does not define
their morphisms and (higher) homotopies. Since model structures
do not in general restrict in a sensible way to full subcategories, 
more work is required to obtain a model category that presents
the homotopy theory of AQFTs satisfying the homotopy time-slice axiom.
This issue has been addressed and solved in \cite[Section 3.2]{Carmona},
where two different but equivalent approaches have been identified.
The first approach, which we shall sketch only briefly as it will not play an important
role in our paper, uses the concept of {\em homotopical localization of operads},
generalizing Dwyer-Kan localization of simplicial categories \cite{Hinich3}, 
in order to invert (up to homotopy) the $1$-ary 
operations $W$ in the AQFT operad $\O_{\ovr{\CC}}^{}$. Homotopical localizations
can be determined in the model category $\mathbf{sOp}$ of
simplicial (i.e.\ $\sSet$-enriched) colored operads and they are defined by
the following homotopical generalization of the universal property
from Definition \ref{def:operadlocalization} and Remark \ref{rem:Oplocalization}:
A $\mathbf{sOp}$-morphism $L^\infty : \P\to\P[W^{-1}]^{\infty}$ is called
a homotopical localization of $\P\in\mathbf{sOp}$ at a subset $W$ of 
$1$-ary operations in $\P$ if $L^\infty$ sends $W$ to equivalences and, for every
$\Q\in\mathbf{sOp}$, the pullback map
\begin{flalign}\label{eqn:homotopicaluniversal}
(-)L^\infty\,:\,\mathrm{Map}_{\mathbf{sOp}}\big(\P[W^{-1}]^{\infty},\Q\big)~\longrightarrow~\mathrm{Map}_{\mathbf{sOp}}(\P,\Q)^{\mathrm{ho}W}
\end{flalign}
is a weak homotopy equivalence, where $\mathrm{Map}_{\mathbf{sOp}}(\P[W^{-1}]^{\infty},\Q)\in\sSet$
is the mapping space in $\mathbf{sOp}$ and $\mathrm{Map}_{\mathbf{sOp}}(\P,\Q)^{\mathrm{ho}W}\subseteq 
\mathrm{Map}_{\mathbf{sOp}}(\P,\Q)$ denotes the simplicial subset of maps that send $W$ to equivalences.
Homotopical localizations of operads can be computed
by a homotopy pushout that is similar to the one for simplicial categories \cite[Section 3]{Hinich3}
or by a generalization of Hammock localization to trees \cite{LocalizationOperad}.
In our AQFT context, the universal property \eqref{eqn:homotopicaluniversal}
implies that the homotopy time-slice axiom from Definition \ref{def:homotopytimeslice}
can be enforced by considering algebras over the homotopically localized AQFT operad 
$\O_{\ovr{\CC}}^{}[W^{-1}]^{\infty}\in\mathbf{sOp}$. Passing over $\Ch_\bbK$-enriched colored operads
by using the normalized chains functor $N_\ast(-,\bbK) : \sSet\to \Ch_\bbK$, we can define the model
category
\begin{flalign}\label{eqn:enrichedHom}
\AQFT(\ovr{\CC})^{\mathrm{ho}W}\,:=\, \Alg_{\O_{\ovr{\CC}}^{}[W^{-1}]^\infty}^{}\big(\Ch_\bbK\big)
\,:=\,\Hom_{\Op_{\Ch_\bbK}^{}}\big(N_\ast\big(\O_{\ovr{\CC}}^{}[W^{-1}]^\infty,\bbK\big),\Ch_{\bbK}\big)
\end{flalign}
of $\Ch_\bbK$-enriched multifunctors, endowed with the projective model structure from \cite{Hinich1,Hinich2}.
This perspective on the homotopy time-slice axiom is conceptually very clear
thanks to the universal property of homotopical localizations \eqref{eqn:homotopicaluniversal},
but unfortunately it is difficult to use in practice because it is hard to compute
homotopical localizations such as $\O_{\ovr{\CC}}^{}[W^{-1}]^\infty$.
\sk

In order to prove our results in this paper, it will be convenient
to use the equivalent approach from \cite[Section 3.2]{Carmona} that enforces 
the homotopy time-slice axiom via a {\em left Bousfield localization}. 
(We refer the reader to \cite{Barwick} or to Hirschhorn's book \cite{Hirschhorn}
for an introduction to Bousfield localizations of model categories 
and the related concepts of local objects and local equivalences.) 
The basic idea is to work with the same underlying category $\AQFT(\ovr{\CC})$,
but to introduce a new model structure that has more weak equivalences and fewer fibrations
than the projective one in Theorem \ref{theo:projmodel}. When designed correctly,
this will lead to a new model category $\mathcal{L}_{\widehat{W}} \AQFT(\ovr{\CC})$ 
that is Quillen equivalent to \eqref{eqn:enrichedHom} and 
in which the homotopy time-slice axiom becomes a fibrancy condition.
The construction of such a model structure is slightly technical and the relevant details
can be found in \cite{Carmona} and \cite[Section 6]{CFM}. The main result may be 
summarized as follows.
\begin{theo}\label{theo:Bousfield}
Let $\ovr{\CC}$ be an orthogonal category and $W\subseteq \Mor\,\CC$ a subset.
Denote by $\AQFT(\ovr{\CC})$ the projective AQFT model category from Theorem \ref{theo:projmodel}.
\begin{itemize}
\item[a)] There exists a subset $\widehat{W}\subseteq \Mor\,\AQFT(\ovr{\CC})$ 
(see Remark \ref{rem:Whatmaps} below for an explicit description)
such that an object $\AAA\in \AQFT(\ovr{\CC})$
is $\widehat{W}$-local if and only if it satisfies the homotopy time-slice axiom for  $W\subseteq \Mor\,\CC$.

\item[b)] The left Bousfield localization $\mathcal{L}_{\widehat{W}} \AQFT(\ovr{\CC})$ 
of the projective AQFT model category at  $\widehat{W}$ from item a) exists.
The fibrant objects in this localized model structure are precisely the 
AQFTs satisfying the homotopy time-slice axiom for $W$.

\item[c)] The homotopical localization multifunctor $L^\infty : \O_{\ovr{\CC}}^{}\to \O_{\ovr{\CC}}^{}[W^{-1}]^{\infty}$
induces via pullback a Quillen equivalence 
\begin{flalign}
\xymatrix{
L_!^\infty \,:\, \mathcal{L}_{\widehat{W}} \AQFT(\ovr{\CC})\ar@<0.75ex>[r]_-{\sim}~&~\ar@<0.75ex>[l]\AQFT\big(\ovr{\CC})^{\mathrm{ho}W}\,:\,L^{\infty\,\ast}
}
\end{flalign}
between $\mathcal{L}_{\widehat{W}} \AQFT(\ovr{\CC})$ and the projective model category 
$\AQFT\big(\ovr{\CC})^{\mathrm{ho}W}$ in \eqref{eqn:enrichedHom}.
\end{itemize}
\end{theo}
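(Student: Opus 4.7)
The plan is to establish the three parts in sequence: part (a) constructs $\widehat{W}$ via the free-forgetful adjunctions at each color, part (b) invokes a standard existence theorem for left Bousfield localizations, and part (c) upgrades the resulting Quillen adjunction induced by $L^\infty$ to a Quillen equivalence by exploiting the universal property \eqref{eqn:homotopicaluniversal} of the homotopical localization.

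For part (a), at every object $M\in\CC$ the evaluation functor $\ev_M : \AQFT(\ovr{\CC}) \to \Ch_\bbK$, $\AAA\mapsto \AAA(M)$, preserves all limits (these are computed componentwise in $\Ch_\bbK$) and hence admits a left adjoint $F_M : \Ch_\bbK \to \AQFT(\ovr{\CC})$ by local presentability. Each Cauchy morphism $w:M\to N$ in $W$ yields, for every $K\in\Ch_\bbK$, a morphism $f_{w,K} : F_N(K)\to F_M(K)$ obtained from the unit of $F_M \dashv \ev_M$ and the structure map encoded by $w$ in $\O_{\ovr{\CC}}^{}$. I would set $\widehat{W}$ to be the collection of all such $f_{w,K}$ with $w\in W$ and $K$ ranging over a small set of cofibrant chain complexes sufficient to detect quasi-isomorphisms between fibrant objects of $\Ch_\bbK$ (for instance the domains and codomains of a set of generating cofibrations). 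Since every AQFT is fibrant and each $F_M$ is left Quillen, the adjunction identifies $\widehat{W}$-locality of $\AAA$ with the requirement that $\mathrm{RMap}_{\Ch_\bbK}(K,\AAA(M))\to \mathrm{RMap}_{\Ch_\bbK}(K,\AAA(N))$ be a weak equivalence for every $K$ in the chosen set, which is equivalent to $\AAA(w)$ being a quasi-isomorphism.

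For part (b), I would apply a standard existence theorem for left Bousfield localizations, say Smith's theorem in the combinatorial setting or Hirschhorn's in the cellular setting, reducing the problem to verifying that $\AQFT(\ovr{\CC})$ is left proper and combinatorial (or cellular). Combinatoriality is immediate: $\AQFT(\ovr{\CC})$ is the category of algebras over a colored operad in the combinatorial, locally presentable category $\Ch_\bbK$, and the projective model structure is cofibrantly generated by the free AQFTs $F_M(i)$ and $F_M(j)$ on generating (trivial) cofibrations $i,j$ of $\Ch_\bbK$. Left properness follows because $\Ch_\bbK$ is left proper over a field (every chain complex is cofibrant) and weak equivalences and cofibrations in $\AQFT(\ovr{\CC})$ can be analyzed componentwise on a cellular pushout of free AQFTs. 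The characterization of fibrant objects as AQFTs satisfying the homotopy time-slice axiom is then immediate from part (a) together with the general theory of left Bousfield localizations.

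For part (c), the multifunctor $L^\infty$ induces a Quillen adjunction $L_!^\infty \dashv L^{\infty\,\ast}$ between the projective model structures on $\AQFT(\ovr{\CC})$ and $\AQFT(\ovr{\CC})^{\mathrm{ho}W}$ via the $\Ch_\bbK$-enriched analogue of Proposition~\ref{prop:LKE}. Since every algebra over the homotopically localized operad automatically satisfies the homotopy time-slice axiom, $L_!^\infty$ sends the maps in $\widehat{W}$ to weak equivalences, so by the universal property of left Bousfield localizations the Quillen adjunction descends to one with source $\mathcal{L}_{\widehat{W}}\AQFT(\ovr{\CC})$. The functor $L^{\infty\,\ast}$ detects weak equivalences between fibrant objects by the componentwise criterion, so it remains to verify that the derived unit is a $\widehat{W}$-local weak equivalence on cofibrant objects. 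This final step is the main obstacle: it reduces, via the defining universal property \eqref{eqn:homotopicaluniversal} of $L^\infty$ applied with $\Q = \Ch_\bbK$ viewed as a $\Ch_\bbK$-enriched operad, to identifying the derived mapping spaces from $\O_{\ovr{\CC}}^{}$ and $\O_{\ovr{\CC}}^{}[W^{-1}]^\infty$ with the derived mapping spaces in the respective model categories of algebras, a translation carried out in \cite{Carmona}.
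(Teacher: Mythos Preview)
The paper does not actually prove this theorem: it is imported wholesale from \cite{Carmona} and \cite[Section~6]{CFM}, and the only local content is the explicit description of $\widehat{W}$ in Remark~\ref{rem:Whatmaps}. Your construction in part~(a) via the free--evaluation adjunctions $F_M\dashv \ev_M$ is correct and, once you take $K=\bbK[r]$ for $r\in\bbZ$, reproduces precisely the maps $i_!\big(\mathsf{y}(f)[r]\big)$ of Remark~\ref{rem:Whatmaps}: indeed $F_M(\bbK[r])\cong i_!\big(\mathsf{y}(M)[r]\big)$ since both objects corepresent $\AAA\mapsto \AAA(M)_r$. Your outline for part~(c) also tracks the strategy of \cite{Carmona}, and you correctly flag that the derived unit computation is where the genuine work lies.

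There is, however, a gap in your argument for part~(b). Your justification of left properness---``weak equivalences and cofibrations in $\AQFT(\ovr{\CC})$ can be analyzed componentwise on a cellular pushout of free AQFTs''---does not go through as stated: pushouts in $\AQFT(\ovr{\CC})$ are \emph{not} computed componentwise (already the coproduct of two free associative algebras is a free product, not a direct sum), and cofibrations of algebras are not detected componentwise. Left properness of $\Alg_{\O}(\Ch_\bbK)$ over a field of characteristic~$0$ is true, but it requires a genuine argument---e.g.\ the $\Sigma$-cofibrancy of every operad in characteristic~$0$ combined with the pushout filtration analysis of Berger--Moerdijk/Spitzweck, or the semi-model techniques used in \cite{Carmona}. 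You should either cite such a result or indicate the filtration argument; the reduction you sketch is not one.
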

\begin{rem}\label{rem:Whatmaps}
The set $\widehat{W}$ of AQFT morphisms is determined
from the given set $W$ of $\CC$-morphisms
by the following construction from \cite[Definition 6.5 and Proposition 6.4]{CFM}:
Using the covariant Yoneda embedding 
\begin{flalign}
\mathsf{y}(-)\,:\, \CC^{\op} ~\longrightarrow~ \Fun(\CC,\Ch_\bbK)~,~~M~\longmapsto~\mathsf{y}(M)
= \CC(M,-)\otimes \bbK\quad,
\end{flalign}
we can assign to every $\CC$-morphism $f: M\to N$ a 
morphism $\mathsf{y}(f) : \mathsf{y}(N)\to \mathsf{y}(M)$ 
of $\Ch_\bbK$-valued functors on $\CC$. We further can shift
the homological degree by any integer $r\in\mathbb{Z}$ and consider 
$\mathsf{y}(f)[r] : \mathsf{y}(N)[r]\to \mathsf{y}(M)[r]$. Since $i : \CC \hookrightarrow \O_{\ovr{\CC}}^{}$
embeds as the category of $1$-ary operations in the AQFT operad, we obtain in analogy 
to Propositions \ref{prop:LKE} and \ref{prop:Quillenprojective}
a Quillen adjunction $i_! : \Fun(\CC,\Ch_\bbK) \rightleftarrows\AQFT(\ovr{\CC}) : i^\ast$.
The subset $\widehat{W}\subseteq \Mor\,\AQFT(\ovr{\CC})$ then consists of
the morphisms
\begin{flalign}
i_!\big( \mathsf{y}(f)[r]\big)\,:\, i_!\big(\mathsf{y}(N)[r]\big)~\longrightarrow~i_!\big(\mathsf{y}(M)[r]\big)\quad,
\end{flalign}
where $(f: M\to N)\in W$ runs over all morphisms in $W$ and $r\in \bbZ$
runs over all integers.
\end{rem}

\begin{rem}
The reader might wonder about the physical interpretation
of the left Bousfield localized model category
$\mathcal{L}_{\widehat{W}} \AQFT(\ovr{\CC})$ and
its additional weak equivalences. 
In particular, one should {\it not} be surprised that every AQFT $\AAA\in \AQFT(\ovr{\CC})$,
whether or not it satisfies the homotopy time-slice axiom,
is weakly equivalent in $\mathcal{L}_{\widehat{W}} \AQFT(\ovr{\CC})$
to an AQFT that satisfies the homotopy time-slice axiom, e.g.\ by taking a
local fibrant replacement of $\AAA$. We would like to stress
that this behavior is not inconsistent and in fact 
necessary for the model category $\mathcal{L}_{\widehat{W}} \AQFT(\ovr{\CC})$ to be 
Quillen equivalent to the model category $\AQFT(\ovr{\CC})^{\mathrm{ho}W}$ in 
\eqref{eqn:enrichedHom} that describes AQFTs satisfying the homotopy time-slice axiom
through the concept of homotopical localizations, which is precisely the statement of
Theorem \ref{theo:Bousfield} c). In other words,
for the model category $\mathcal{L}_{\widehat{W}} \AQFT(\ovr{\CC})$ 
to describe the homotopy theory of \textit{AQFTs satisfying the homotopy time-slice axiom},
it necessarily must be true that every object in $\mathcal{L}_{\widehat{W}} \AQFT(\ovr{\CC})$ is 
weakly equivalent to an AQFT that satisfies this axiom.
To avoid any misconceptions about the additional weak equivalences in
the model category $\mathcal{L}_{\widehat{W}} \AQFT(\ovr{\CC})$, 
let us recall the following standard result about left Bousfield localizations, 
see e.g.\ \cite[Theorem 3.2.13]{Hirschhorn}: Given any two \textit{fibrant} objects
$\AAA,\BBB\in \mathcal{L}_{\widehat{W}} \AQFT(\ovr{\CC})$ in the localized model
category, a morphism $\zeta :\AAA\to \BBB$
is a local weak equivalence if and only if it is a projective weak equivalence 
in the sense of Theorem \ref{theo:projmodel}. Using Theorem \ref{theo:Bousfield} b) and 
translating this result to AQFT terminology, this means that both types of weak
equivalences (local and projective) coincide whenever $\AAA$ and $\BBB$ satisfy the homotopy time-slice axiom.
In particular, this implies that there are no additional weak equivalences between the objects
that the model category $\mathcal{L}_{\widehat{W}} \AQFT(\ovr{\CC})$ is designed to describe.
\end{rem}

The following result will be crucial for formulating and 
proving our strictification theorems for the homotopy time-slice axiom.
\begin{propo}\label{prop:localizedadjunction}
Suppose that $L : \ovr{\CC}\to \ovr{\CC}[W^{-1}]$ is a localization of orthogonal categories.
The associated adjunction from Proposition \ref{prop:LKE} defines a Quillen adjunction
\begin{flalign}
\xymatrix{
L_! \,:\, \mathcal{L}_{\widehat{W}}\AQFT(\ovr{\CC})\ar@<0.5ex>[r]~&~\ar@<0.5ex>[l]
\AQFT\big(\ovr{\CC}[W^{-1}]\big)\,:\,L^\ast
}
\end{flalign}
between the Bousfield localized model category $\mathcal{L}_{\widehat{W}}\AQFT(\ovr{\CC})$ 
from Theorem \ref{theo:Bousfield} and the projective model category $\AQFT\big(\ovr{\CC}[W^{-1}]\big)$ 
of AQFTs on $\ovr{\CC}[W^{-1}]$ from Theorem \ref{theo:projmodel}.
\end{propo}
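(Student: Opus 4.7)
The starting point is Proposition \ref{prop:Quillenprojective}, which already tells us that $L_! \dashv L^\ast$ is a Quillen adjunction between the projective model structures on $\AQFT(\ovr{\CC})$ and $\AQFT(\ovr{\CC}[W^{-1}])$. Since the left Bousfield localization $\mathcal{L}_{\widehat{W}}\AQFT(\ovr{\CC})$ has the same underlying category and the same cofibrations as the projective model structure (it only has more weak equivalences and hence fewer fibrations), the left adjoint $L_!$ automatically still preserves cofibrations. So the entire content of the claim is that $L_!$ also preserves the acyclic cofibrations of the Bousfield localized model structure, or equivalently (by the standard criterion for descending a Quillen adjunction to a left Bousfield localization, e.g.\ \cite[Theorem 3.3.20]{Hirschhorn}), that the right adjoint $L^\ast$ sends fibrant objects of $\AQFT(\ovr{\CC}[W^{-1}])$ to $\widehat{W}$-local objects of $\AQFT(\ovr{\CC})$.

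The plan is therefore to verify exactly this condition. By Theorem \ref{theo:Bousfield} a), an object of $\AQFT(\ovr{\CC})$ is $\widehat{W}$-local if and only if it satisfies the homotopy time-slice axiom for $W$ in the sense of Definition \ref{def:homotopytimeslice}. Every object of the projective model category $\AQFT(\ovr{\CC}[W^{-1}])$ is fibrant, so we only need to check that for an arbitrary $\AAA \in \AQFT(\ovr{\CC}[W^{-1}])$, the pulled back AQFT $L^\ast(\AAA) = \AAA\,\O_L^{} : \O_{\ovr{\CC}}^{}\to \Ch_\bbK$ sends every morphism $(f : M\to N)\in W$ to a quasi-isomorphism.

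This last check is immediate from the very definition of an orthogonal localization: by Definition \ref{def:OCatlocalization} (i), the orthogonal functor $L$ sends each $f\in W$ to an isomorphism $L(f)$ in $\ovr{\CC}[W^{-1}]$, and by the formula \eqref{eqn:OFformula} for $\O_L^{}$ the operation $f$ is mapped to the $1$-ary operation $L(f)$ in $\O_{\ovr{\CC}[W^{-1}]}^{}$. Since the multifunctor $\AAA$ preserves isomorphisms, $L^\ast(\AAA)(f) = \AAA(L(f))$ is an isomorphism in $\Ch_\bbK$, and in particular a quasi-isomorphism. Thus $L^\ast(\AAA)$ is $\widehat{W}$-local, and the descent criterion yields the desired Quillen adjunction.

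There is no real obstacle here beyond unpacking the right universal property: the only subtle point is invoking the Bousfield-localization descent criterion correctly, which requires knowing that every object of the target projective model category is fibrant (so that checking the condition on all objects is equivalent to checking it on fibrant objects) and that $\widehat{W}$-locality coincides with the homotopy time-slice axiom, both of which are provided by the preceding results.
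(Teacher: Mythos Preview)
Your proof is correct and follows essentially the same strategy as the paper: both arguments reduce to the observation that $L^\ast(\BBB)$ satisfies the strict (hence homotopy) time-slice axiom, so $L^\ast$ lands in the $\widehat{W}$-local objects. The only difference is cosmetic: the paper finishes by directly checking that $L^\ast$ preserves fibrations, invoking \cite[Proposition 3.3.16]{Hirschhorn} (projective fibrations between local objects are local fibrations), whereas you package the same observation via the descent criterion for Quillen adjunctions through a left Bousfield localization.
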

\begin{proof}
From the equivalent characterizations of Quillen adjunctions, see e.g.\ \cite[Proposition 8.5.3]{Hirschhorn},
we find it most convenient to prove that $L_!$ preserves cofibrations and that $L^\ast$ preserves fibrations.
Recalling that the cofibrations in the left Bousfield localization  
$\mathcal{L}_{\widehat{W}}\AQFT(\ovr{\CC})$ agree by \cite[Definition 3.3.1]{Hirschhorn}
with the cofibrations in the projective model category $\AQFT(\ovr{\CC})$, the first claim follows directly 
from Proposition \ref{prop:Quillenprojective}. 
\sk

To prove that $L^\ast$ preserves fibrations, let us first observe that, for every 
$\BBB\in \AQFT\big(\ovr{\CC}[W^{-1}]\big)$, the AQFT $L^\ast (\BBB)\in \mathcal{L}_{\widehat{W}}\AQFT(\ovr{\CC})$ 
satisfies the strict time-slice axiom and hence also the homotopy time-slice axiom. 
Recalling Theorem \ref{theo:Bousfield} b), this means that $L^\ast$ maps to the
fibrant objects in $\mathcal{L}_{\widehat{W}}\AQFT(\ovr{\CC})$.
Combining this with Proposition \ref{prop:Quillenprojective}, we obtain that
$L^\ast$ maps fibrations to projective fibrations between fibrant 
objects in $\mathcal{L}_{\widehat{W}}\AQFT(\ovr{\CC})$, which due to
\cite[Proposition 3.3.16]{Hirschhorn} are also fibrations in the localized model structure.
\end{proof}
\begin{rem}
Observe that the Quillen adjunction from Proposition \ref{prop:localizedadjunction} allows us to compare between 
the homotopy time-slice axiom from Definition \ref{def:homotopytimeslice} 
and the usual {\em strict} time-slice axiom that is formulated in terms of isomorphisms rather than quasi-isomorphisms.
Indeed, by Theorem \ref{theo:Bousfield} c), 
the model category $\mathcal{L}_{\widehat{W}}\AQFT(\ovr{\CC})$ describes
AQFTs that satisfy the homotopy time-slice axiom and, by Corollary \ref{cor:TimeSlice via LocalizationOfOrtCat},
the model category $\AQFT\big(\ovr{\CC}[W^{-1}]\big)$ describes AQFTs that satisfy the strict time-slice axiom.
\end{rem}


\section{\label{sec:reflectivestrictification}Strictification theorem for reflective localizations}
In this section we consider a special class of orthogonal localizations
that are reflective in the sense of Definition \ref{def:OCatreflocalization}.
In this context we shall prove that the homotopy time-slice axiom 
is equivalent (in a suitable sense) to the strict time-slice axiom. 
The precise statement is our Theorem \ref{theo:strictificationreflective} below.
\sk

The following definition is an adaption of the concept of 
reflective localizations from ordinary category theory.
\begin{defi}\label{def:OCatreflocalization}
A localization of orthogonal categories $L : \ovr{\CC}\to \ovr{\CC}[W^{-1}]$
is called {\em reflective} if the orthogonal functor $L$ admits a right adjoint 
$\iota : \ovr{\CC}[W^{-1}] \to \ovr{\CC}$ in the $2$-category $\OCat$
whose underlying functor $\iota : \CC[W^{-1}]\to \CC$ is fully faithful.
\end{defi}
\begin{rem}
From this definition, it follows that
the orthogonality relation $\perp_{\CC[W^{-1}]}^{}$ agrees with the pullback 
along $\iota$ of  $\perp_{\CC}^{}$, 
i.e.\ $\perp_{\CC[W^{-1}]}^{}\,=\,\iota^\ast(\perp_{\CC}^{})$.
The inclusion $\perp_{\CC[W^{-1}]}^{}\,\subseteq \iota^{\ast}(\perp_{\CC}^{})$
is a consequence of $\iota$ being an orthogonal functor
and the inclusion $\perp_{\CC[W^{-1}]}^{}\,\supseteq \iota^{\ast}(\perp_{\CC}^{})$
is proven by using that $\iota$ is fully faithful. 
Indeed, given any $(g_1,g_2)\in \iota^{\ast}(\perp_{\CC}^{})$, 
we have by definition of the pullback orthogonality relation that 
$(\iota(g_1),\iota(g_2))\in \,\perp_{\CC}^{}$, 
hence $(L\iota(g_1),L\iota(g_2))\in\, \perp_{\CC[W^{-1}]}^{}$
because $L$ is an orthogonal functor. Using that $\perp_{\CC[W^{-1}]}^{}$
is composition stable and that the adjunction counit $\epsilon: L \iota\to \id_{\CC[W^{-1}]}^{}$ 
is a natural isomorphism,
it follows that $(g_1,g_2)\in\,\perp_{\CC[W^{-1}]}^{}$.
\end{rem}

The following result is useful to detect reflective localizations of orthogonal categories.
\begin{propo}\label{propo:OCatdetectlocalization}
Suppose that $L : \ovr{\CC}\to\ovr{\EE}$ is an orthogonal functor
that admits a fully faithful right adjoint $j : \ovr{\EE}\to \ovr{\CC}$ 
in the $2$-category $\OCat$. Then $L : \ovr{\CC}\to\ovr{\EE}$ is a reflective
localization of the orthogonal category $\ovr{\CC}$ at the set of morphisms
$W:= L^{-1}(\mathrm{Iso}\,\EE)\subseteq \Mor\,\CC$ that are sent via $L$ to isomorphisms in $\ovr{\EE}$.
\end{propo}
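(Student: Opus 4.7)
The plan is to verify directly the three axioms of Definition \ref{def:OCatlocalization} for the pair $(L, \ovr{\EE})$ with $W := L^{-1}(\mathrm{Iso}\,\EE)$, and then observe that reflectivity is immediate since $j$ is by hypothesis a fully faithful orthogonal right adjoint. Axiom (i) is tautological from the choice of $W$.

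The key preliminary observation is that, since $j$ is fully faithful, the counit $\epsilon : L j \to \id_{\ovr{\EE}}$ of the adjunction $L\dashv j$ is a natural isomorphism. Combined with one triangle identity, this forces $L(\eta_M) = \epsilon_{L(M)}^{-1}$ to be an isomorphism for every object $M\in \CC$, where $\eta : \id_{\ovr{\CC}}\to j L$ denotes the unit. Hence every component $\eta_M : M \to jL(M)$ lies in $W$. This is the fact that drives the remaining two axioms.

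For axiom (ii), given an orthogonal functor $F: \ovr{\CC}\to \ovr{\DD}$ sending $W$ to isomorphisms, I would define $F_W := F\,j : \ovr{\EE}\to \ovr{\DD}$, which is orthogonal as a composition of orthogonal functors. Whiskering the unit yields a natural transformation $F\eta : F\to F j L = F_W L$ whose components $F(\eta_M)$ are isomorphisms by the previous step, so $F\cong F_W L$. For axiom (iii), given orthogonal functors $G, H : \ovr{\EE}\to \ovr{\DD}$ and a natural transformation $\chi : G L\to H L$, I would define an extension $\tilde{\chi} : G\to H$ by the standard conjugation formula
\begin{flalign*}
\tilde{\chi}_X \,:=\, H(\epsilon_X)\,\chi_{j(X)}\,G(\epsilon_X^{-1})\, :\, G(X)~\longrightarrow~H(X)\quad,
\end{flalign*}
which is natural in $X$ because $\chi$, $\epsilon$ and $G$, $H$ are. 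That $\tilde{\chi}\,L = \chi$ follows by evaluating at $L(M)$ and using $L(\eta_M) = \epsilon_{L(M)}^{-1}$ together with naturality of $\chi$ along $\eta_M$; uniqueness of $\tilde{\chi}$ is forced by this same computation. Note that no additional $\OCat$-compatibility check is needed here since $2$-morphisms in $\OCat$ are just ordinary natural transformations between the underlying orthogonal functors.

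I do not expect a real obstacle in this proof: all the work is done by the fact that $\epsilon$ is an isomorphism (equivalently, that $j$ is fully faithful), and the argument is essentially the classical one identifying reflective subcategories with localizations, lifted verbatim to $\OCat$. The only point that might warrant a short sentence is noting that the extension $F_W = F\, j$ is automatically orthogonal, so no extra verification of the orthogonality relation on $\ovr{\EE}$ intervenes in axioms (ii)–(iii); the compatibility $\perp_{\EE}^{} = j^\ast(\perp_{\CC}^{})$ recorded in the preceding remark is then just a convenient consistency check, not something needed to establish the universal property.
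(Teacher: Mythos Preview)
Your proof is correct but takes a different route from the paper's. You verify the three axioms of Definition~\ref{def:OCatlocalization} directly in $\OCat$, constructing $F_W := F\,j$ and the conjugation formula for $\tilde{\chi}$ explicitly; this is the standard argument that a reflective subcategory is a localization, lifted to $\OCat$ using only that $j$ is orthogonal and fully faithful. The paper instead outsources the universal property to the underlying categories (calling it a ``basic exercise'') and then concentrates on the orthogonality relations: it shows $\perp_{\EE}^{} = L_\ast(\perp_{\CC}^{})$ by pushing an orthogonal pair through $j$ and $L$ and conjugating by the counit isomorphism, whereupon Proposition~\ref{prop:Olocalizationmodel} identifies $\ovr{\EE}$ with the canonical model for the orthogonal localization. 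Your approach is more self-contained and makes no appeal to Proposition~\ref{prop:Olocalizationmodel}; the paper's approach is shorter precisely because it leverages that earlier structural result. One small inaccuracy in your closing paragraph: the relation recorded in the preceding Remark is $\perp_{\CC[W^{-1}]}^{} = \iota^\ast(\perp_{\CC}^{})$ (pullback along the right adjoint), whereas the identity actually used in the paper's proof is $\perp_{\EE}^{} = L_\ast(\perp_{\CC}^{})$ (pushforward along $L$); these are related but not the same statement.
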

\begin{proof}
It is a basic exercise in category theory to prove that the underlying functor
$L: \CC\to \EE$ is a localization of the category $\CC$ at $W= L^{-1}(\mathrm{Iso}\,\EE)\subseteq \Mor\,\CC$.
The hypothesis that $L : \ovr{\CC}\to\ovr{\EE}$ is an orthogonal functor
implies that $L_\ast(\perp_{\CC}^{})\subseteq \,\perp_{\EE}^{}$, so it remains
to prove the inclusion $\perp_{\EE}^{}\, \subseteq L_\ast(\perp_{\CC}^{})$.
Given any $g_1\perp_{\EE}^{}g_2$, it follows that $j(g_1)\perp_{\CC}^{}j(g_2)$
and also $Lj(g_1)\perp_{\EE}^{}Lj(g_2)$ because $j$ and $L$ are orthogonal functors.
Note that $\big(Lj(g_1),Lj(g_2)\big)\in L_\ast(\perp_{\CC}^{})\subseteq \,\perp_{\EE}^{}$
defines an element of the pushforward orthogonality relation. Using that $L_\ast(\perp_{\CC}^{})$
is composition stable and that the adjunction counit $\epsilon: L j\to \id_{\EE}^{}$ is a natural isomorphism,
it follows that $(g_1,g_2)\in L_\ast(\perp_{\CC}^{})$.
\end{proof}

\begin{ex}\label{ex:OCatreflocalization}
Of particular interest for AQFT is the case where $\ovr{\CC}$ is some 
category of globally hyperbolic Lorentzian manifolds, with orthogonality relation determined by
causal disjointness, and $W\subseteq \Mor\,\CC$ is the subset of all Cauchy morphisms. See e.g.\ 
\cite{BFV,FewsterVerch} and also \cite[Section 3.5]{BSWoperad} for the relevant context and terminology.
The following examples of reflective localizations have been established in the literature:
\begin{itemize}
\item In \cite[Proposition 3.3]{BDSboundary}, it is shown that the localization at all Cauchy morphisms
of the orthogonal category $\ovr{\mathcal{R}_M}$ of causally convex open subsets of a fixed globally 
hyperbolic Lorentzian manifold $M$ (with or without time-like boundary) is reflective.
This family of examples covers Haag-Kastler-type AQFTs.

\item In \cite[Section 3]{Skeletal}, it is shown that the localization at all Cauchy morphisms
of the orthogonal category $\ovr{\mathbf{CLoc}_2}$ of oriented and time-oriented connected globally hyperbolic {\em conformal} 
Lorentzian $2$-manifolds is reflective.  This example covers locally covariant {\em conformal} AQFTs, in the sense of
\cite{Pinamonti}, in two spacetime dimensions.
\end{itemize}

Another example is the localization of the 
orthogonal category $\ovr{\Loc_1}$ of oriented and time-oriented
connected globally hyperbolic Lorentzian $1$-manifolds at all Cauchy morphisms. 
As this example has not been recorded in the literature yet, we include a proof in Appendix \ref{app:Loc1}.
On the other hand, it is presently unclear to us whether or not 
the localization at all Cauchy morphisms of the orthogonal category $\ovr{\Loc_m}$ of oriented and time-oriented
connected globally hyperbolic Lorentzian $m$-manifolds is reflective for dimension $m\geq 2$.
\end{ex}

\begin{cex}\label{cex:disksreflective}
This counterexample is inspired by topological field theories.
Denote by $\Disk(\bbR^m)$ the category whose 
objects are open $m$-disks $U\subseteq \bbR^m$ and morphisms are subset inclusions.
We endow this category with the disjointness orthogonality relation, i.e.\ 
$(U_1\subseteq V)\perp (U_2\subseteq V)$ if and only if $U_1\cap U_2 = \emptyset$.
Let us take $W$ to be the isotopy equivalences, which are
all morphisms in $\Disk(\bbR^m)$. The localization $L : \Disk(\bbR^m) \to \Disk(\bbR^m)[W^{-1}]=\{\ast\}$ 
of the underlying category is a singleton, i.e.\ a category with only one object 
$\ast$ and its identity morphism $\id_\ast$.  The induced orthogonality relation
is given by $L_\ast(\perp) = \{(\id_\ast,\id_\ast)\}$, i.e.\ the identity $\id_\ast$ is orthogonal to itself.
At the level of the underlying categories, this localization is reflective with right adjoint
$\iota : \{\ast\} \to \Disk(\bbR^m)$ defined by $\iota(\ast) = \bbR^m$. However, the functor $\iota$ 
is {\em not} orthogonal because $\bbR^m\cap \bbR^m\neq \emptyset$ are not disjoint. 
This provides a simple example of an orthogonal localization 
$L : \ovr{\Disk(\bbR^m)} \to \ovr{\{\ast\}}$ that is not reflective
in the sense of Definition \ref{def:OCatreflocalization}, even though the underlying 
localization of categories is reflective. 
\end{cex}

Given any reflective localization of orthogonal categories
$L : \ovr{\CC}\rightleftarrows \ovr{\CC}[W^{-1}] : \iota$,
we obtain from the $2$-functor \eqref{eqn:AQFT2functor} an adjunction
\begin{flalign}\label{eqn:reflectiveleftadjoint}
\xymatrix{
\iota^\ast \,:\, \AQFT(\ovr{\CC})\ar@<0.5ex>[r]~&~\ar@<0.5ex>[l]
\AQFT\big(\ovr{\CC}[W^{-1}]\big)\,:\,L^\ast
}
\end{flalign}
between the associated AQFT categories. From the uniqueness
(up to natural isomorphism) of adjoint functors, it then follows that
$\iota^\ast $ is a model for the left Quillen functor $L_!$ in 
Proposition \ref{prop:localizedadjunction}. 
This is the key observation to prove the following strictification theorem.
\begin{theo}\label{theo:strictificationreflective}
Let $L : \ovr{\CC}\rightleftarrows \ovr{\CC}[W^{-1}] : \iota$ be a 
reflective localization of orthogonal categories.
Then the Quillen adjunction
\begin{flalign}
\xymatrix{
L_!\,=\,\iota^\ast \,:\, \mathcal{L}_{\widehat{W}}\AQFT(\ovr{\CC})\ar@<0.5ex>[r]~&~\ar@<0.5ex>[l]
\AQFT\big(\ovr{\CC}[W^{-1}]\big)\,:\,L^\ast
}
\end{flalign}
from Proposition \ref{prop:localizedadjunction} is a Quillen equivalence.
\end{theo}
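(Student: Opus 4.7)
My plan is to establish the Quillen equivalence by checking that the derived unit and derived counit of $(L_!, L^\ast)$ are weak equivalences. The key preparatory observation is that applying the contravariant $2$-functor $\AQFT$ to the reflective $\OCat$-adjunction $L \dashv \iota$ produces a $\Cat$-adjunction $\iota^\ast \dashv L^\ast$ whose unit and counit are the whiskerings of the $\OCat$-unit and $\OCat$-counit. Since $\iota$ is fully faithful, equivalently $\epsilon : L\iota \to \id$ is a natural isomorphism, the induced counit $\epsilon^\ast : \iota^\ast L^\ast \to \id$ is itself a natural isomorphism. By uniqueness of left adjoints, $L_! \cong \iota^\ast$, and in particular the counit of $L_! \dashv L^\ast$ is a natural isomorphism; moreover the unit $\eta^\ast_{\AAA} : \AAA \to L^\ast L_!\AAA$ has component at $M \in \CC$ given by $\AAA(\eta_M) : \AAA(M) \to \AAA(\iota L M)$, where $\eta_M : M \to \iota L M$ lies in $W$ thanks to the triangle identity combined with $\epsilon$ being an isomorphism.

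For the derived counit at $\BBB \in \AQFT(\ovr{\CC}[W^{-1}])$, recall that every object is fibrant. It thus suffices to show that $L_!(Q L^\ast \BBB) \to L_!(L^\ast \BBB) \cong \BBB$ is a projective weak equivalence, where $Q$ denotes cofibrant replacement and the final isomorphism is the counit. The remaining map is $L_!$ applied to the trivial fibration $Q L^\ast \BBB \to L^\ast \BBB$; since $L_! \cong \iota^\ast$ is a pullback functor, it preserves componentwise quasi-isomorphisms, so this is indeed a projective weak equivalence.

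For the derived unit, consider a cofibrant $\AAA \in \mathcal{L}_{\widehat{W}}\AQFT(\ovr{\CC})$; since the target has only fibrant objects, I must show that $\eta^\ast_{\AAA} : \AAA \to L^\ast L_! \AAA$ is a local weak equivalence. Choose a local fibrant replacement $j : \AAA \to R\AAA$, i.e.\ a trivial cofibration in $\mathcal{L}_{\widehat{W}}\AQFT(\ovr{\CC})$ with $R\AAA$ satisfying the homotopy time-slice axiom (note that $R\AAA$ remains cofibrant). Naturality of the unit provides the commutative square
\[
\xymatrix{
\AAA \ar[r]^-{\eta^\ast_{\AAA}} \ar[d]_-{j} & L^\ast L_! \AAA \ar[d]^-{L^\ast L_! j} \\
R\AAA \ar[r]_-{\eta^\ast_{R\AAA}} & L^\ast L_! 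R\AAA
}
\]
in which three sides are local weak equivalences: $j$ is one by construction; $\eta^\ast_{R\AAA}$ has components $R\AAA(\eta_M)$, which are quasi-isomorphisms by the homotopy time-slice axiom of $R\AAA$ together with $\eta_M \in W$, so it is a projective and hence a local weak equivalence; and $L^\ast L_! j$ is obtained by applying the left Quillen functor $L_!$ to the trivial cofibration $j$ (yielding a trivial cofibration in $\AQFT(\ovr{\CC}[W^{-1}])$) followed by the pullback functor $L^\ast$, which preserves projective weak equivalences. The $2$-out-of-$3$ property then forces $\eta^\ast_{\AAA}$ to be a local weak equivalence.

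The main obstacle is the derived unit: for a generic cofibrant $\AAA$ that fails the homotopy time-slice axiom, the components $\AAA(\eta_M)$ of $\eta^\ast_{\AAA}$ are not quasi-isomorphisms, so $\eta^\ast_{\AAA}$ is \emph{not} a projective weak equivalence and one genuinely needs the additional weak equivalences of the Bousfield-localized model structure $\mathcal{L}_{\widehat{W}}\AQFT(\ovr{\CC})$. The naturality trick above bypasses this by comparing $\AAA$ with a fibrant replacement where the homotopy time-slice axiom is available and hence where a direct componentwise argument works.
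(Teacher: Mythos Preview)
Your argument is correct and follows essentially the same route as the paper: identify $L_! \cong \iota^\ast$, use that the counit of $\iota^\ast \dashv L^\ast$ is a natural isomorphism (since $\iota$ is fully faithful), and for the unit observe that its component at $M$ is $\AAA(\eta_M)$, which becomes a quasi-isomorphism once the homotopy time-slice axiom is available. The only genuine difference is organizational: for the derived unit the paper restricts directly to bifibrant $\AAA$ (so that $\AAA$ itself satisfies the homotopy time-slice axiom), while you treat an arbitrary cofibrant $\AAA$, pass to a local fibrant replacement $R\AAA$, and close up with the naturality square and $2$-out-of-$3$. Both are standard; the paper's version is a little shorter.

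There is one imprecision worth flagging. From the triangle identity together with $\epsilon$ being an isomorphism you obtain that $L(\eta_M)$ is an isomorphism, i.e.\ $\eta_M \in L^{-1}(\mathrm{Iso}\,\CC[W^{-1}])$, but in general this set strictly contains $W$, so the assertion ``$\eta_M \in W$'' is not justified as stated. The paper closes this gap by invoking its Proposition~\ref{propo:OCatdetectlocalization} to assume without loss of generality that $W = L^{-1}(\mathrm{Iso}\,\CC[W^{-1}])$. An equivalent fix in your setup is to note that any $\widehat{W}$-local object $R\AAA$ automatically sends every morphism in $L^{-1}(\mathrm{Iso}\,\CC[W^{-1}])$ to a quasi-isomorphism: taking homology in each degree, the functor $H_n \circ R\AAA\vert_{\CC} : \CC \to \Vec_\bbK$ inverts $W$, hence factors through $\CC[W^{-1}]$ and therefore inverts everything $L$ inverts. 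Either remark repairs the step and the rest of your proof goes through unchanged.
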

\begin{proof}
Let us fix any cofibrant replacement $(Q,q)$ for the {\em projective} model structure
on $\AQFT(\ovr{\CC})$, which by \cite[Definition 3.3.1]{Hirschhorn} determines 
a cofibrant replacement for the left Bousfield localized model category
$\mathcal{L}_{\widehat{W}}\AQFT(\ovr{\CC})$. Let us also 
recall that every object in the projective model category $\AQFT\big(\ovr{\CC}[W^{-1}]\big)$ is fibrant.
We then may take $\bbR L^\ast := L^\ast$ and $\bbL \iota^\ast := \iota^\ast\,Q$
as models for the derived functors.
\sk

To show that the derived unit is a natural weak equivalence, it suffices to consider its components
on those objects $\AAA\in\mathcal{L}_{\widehat{W}}\AQFT(\ovr{\CC})$ that are both fibrant and cofibrant.
The question then reduces to proving that the associated components
$\eta_{\AAA}^{} : \AAA\to L^\ast \iota^\ast(\AAA)$ of the ordinary unit
are weak equivalences in $\mathcal{L}_{\widehat{W}}\AQFT(\ovr{\CC})$, or equivalently
projective weak equivalences because both $\AAA$ and $L^\ast \iota^\ast(\AAA)$ are fibrant objects, 
see e.g.\ \cite[Theorem 3.2.13]{Hirschhorn}.
Recalling that the adjunction \eqref{eqn:reflectiveleftadjoint} is obtained
through the AQFT $2$-functor \eqref{eqn:AQFT2functor}, we have that $\eta_{\AAA}^{} =\AAA\,\O_{\eta^{\perp}}^{}  
: \AAA\to L^\ast \iota^\ast(\AAA)= \AAA\,\O_{\iota L}$
where $\eta^\perp$ is the unit for the adjunction $L : \ovr{\CC}\rightleftarrows \ovr{\CC}[W^{-1}] : \iota$ in $\OCat$.
By Proposition \ref{propo:OCatdetectlocalization}, we may assume without loss of 
generality that $W = L^{-1}(\mathrm{Iso}\,\CC[W^{-1}])$, which combined with 
the triangle identities for the adjunction $L\dashv \iota$
implies that all components of $\eta^\perp$ are morphisms in $W$. It then follows that
$\eta_{\AAA}^{} $ is a projective weak
equivalence because $\AAA$ satisfies the homotopy time-slice axiom as it is by hypothesis a fibrant object, 
see Theorem \ref{theo:Bousfield} b). 
\sk

Concerning the derived counit, let us consider its component
\begin{flalign}
\xymatrix{
\iota^\ast\, Q\,L^\ast(\BBB) \ar[rr]^-{\iota^\ast q_{L^\ast(\BBB)}^{}} ~&&~ \iota^\ast\,L^\ast(\BBB)  \ar[r]^-{\epsilon_{\BBB}^{}} ~&~ \BBB
}
\end{flalign}
on an arbitrary object $\BBB\in \AQFT\big(\ovr{\CC}[W^{-1}]\big)$. By our choice
of cofibrant replacement, $q_{L^\ast(\BBB)}^{}$ is a projective weak equivalence 
and hence the first arrow is a weak equivalence since $\iota^\ast$ preserves projective weak equivalences.
It thus remains to show that the second arrow is a weak equivalence too.
For this we recall once more that the adjunction \eqref{eqn:reflectiveleftadjoint} is 
obtained through the AQFT $2$-functor \eqref{eqn:AQFT2functor}, hence $\epsilon_{\BBB}^{} =\BBB\,\O_{\epsilon^\perp}^{} 
:\BBB\, \O_{L\iota}^{}=\iota^\ast\,L^\ast(\BBB)\to \BBB$. The claim then follows
from the fact that the counit $\epsilon^\perp$ 
for the adjunction $L : \ovr{\CC}\rightleftarrows \ovr{\CC}[W^{-1}] : \iota$ in $\OCat$
is a natural isomorphism since the right adjoint $\iota$ is fully faithful.
\end{proof}

A direct consequence of Theorem \ref{theo:strictificationreflective}
is that each $\AAA\in \AQFT(\ovr{\CC})$ that satisfies the homotopy 
time-slice axiom for $W$ is \textit{projectively} weakly equivalent to some 
$\AAA_{\mathrm{st}}\in \AQFT(\ovr{\CC})$ that satisfies the 
strict time-slice axiom for $W$. Such an object can be constructed 
very explicitly in the present case using only {\em underived} functors, 
i.e.\ there is no need to develop models for derived functors.
This important and useful fact is summarized in the following
\begin{cor}\label{cor:reflectiveAQFTwise}
Let $\AAA\in \AQFT(\ovr{\CC})$ be any AQFT that satisfies the homotopy 
time-slice axiom for $W$. Then the corresponding
component $\eta_{\AAA}^{} : \AAA \to L^\ast \iota^\ast(\AAA)$
of the underived unit is a projective weak equivalence,
i.e.\ a weak equivalence in the usual projective model structure 
from Theorem \ref{theo:projmodel}. In particular, 
$\AAA_{\mathrm{st}} :=  L^\ast\iota^\ast(\AAA)$
defines an AQFT that satisfies the strict time-slice axiom 
and is projectively weakly equivalent to $\AAA$.
\end{cor}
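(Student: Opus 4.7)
The plan is to essentially distill the argument already present in the proof of Theorem \ref{theo:strictificationreflective}, avoiding the need for fibrant--cofibrant replacements since the hypothesis on $\AAA$ already provides what is needed. The observation to exploit is that the adjunction \eqref{eqn:reflectiveleftadjoint} is obtained by applying the AQFT $2$-functor \eqref{eqn:AQFT2functor} to the $\OCat$-adjunction $L\dashv \iota$, so the underived unit component factors as $\eta_{\AAA}^{} = \AAA\,\O_{\eta^\perp}^{} : \AAA \to \AAA\,\O_{\iota L}^{} = L^\ast \iota^\ast(\AAA)$, where $\eta^\perp : \id_\CC \to \iota L$ denotes the unit of the reflective adjunction in $\OCat$.

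First I would invoke Proposition \ref{propo:OCatdetectlocalization} to replace $W$ (without loss of generality) by $L^{-1}(\mathrm{Iso}\,\CC[W^{-1}])$. Since $\iota$ is fully faithful, the counit $\epsilon^\perp : L\iota \to \id_{\CC[W^{-1}]}^{}$ is a natural isomorphism, and the triangle identities then force each component $\eta^\perp_M : M \to \iota L(M)$ to be mapped by $L$ to an isomorphism, i.e.\ each $\eta^\perp_M$ lies in $W$.

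Next I would apply the hypothesis that $\AAA$ satisfies the homotopy time-slice axiom (Definition \ref{def:homotopytimeslice}): this means that $\AAA$, viewed as a multifunctor $\O_{\ovr{\CC}}^{}\to \Ch_\bbK$, sends every morphism in $W$ to a quasi-isomorphism. Hence each component $\eta_{\AAA,M}^{} = \AAA(\eta^\perp_M) : \AAA(M)\to \AAA(\iota L(M))$ is a quasi-isomorphism, which is precisely the definition of a projective weak equivalence in $\AQFT(\ovr{\CC})$ (Theorem \ref{theo:projmodel}).

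Finally, for the second assertion I would note that $\iota^\ast(\AAA)\in \AQFT(\ovr{\CC}[W^{-1}])$ is an AQFT on the localized orthogonal category, so by Corollary \ref{cor:TimeSlice via LocalizationOfOrtCat} its pullback $\AAA_{\mathrm{st}} := L^\ast \iota^\ast(\AAA)$ satisfies the strict time-slice axiom for $W$; indeed, $L$ sends $W$ to isomorphisms by construction, so $\AAA_{\mathrm{st}}$ does as well. There is no real obstacle here, as the content is entirely contained in the first half of the proof of Theorem \ref{theo:strictificationreflective}; the point of the corollary is simply to record that, once $\AAA$ already satisfies the homotopy time-slice axiom, one need not pass to any cofibrant replacement and the strictification $\AAA_{\mathrm{st}}$ is produced by a completely explicit underived formula.
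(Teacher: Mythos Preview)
Your proposal is correct and follows essentially the same approach as the paper's own proof, which simply instructs the reader to repeat the second paragraph of the proof of Theorem \ref{theo:strictificationreflective} while noting that the cofibrancy assumption plays no role there. You have spelled out precisely that argument in detail.
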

\begin{proof}
It suffices to repeat the second paragraph of the proof of Theorem \ref{theo:strictificationreflective} 
(note that the cofibrancy assumption plays no role there). 
\end{proof}

\begin{ex}\label{ex:explicitlistofexamples}
Recalling our list of reflective orthogonal localizations from Example \ref{ex:OCatreflocalization}
and Appendix \ref{app:Loc1}, we would like to emphasize that our 
Theorem \ref{theo:strictificationreflective}, which strictifies the homotopy time-slice axiom, covers 
the following important cases: (i)~Haag-Kastler-type AQFTs on a fixed globally 
hyperbolic Lorentzian manifold (with or without time-like boundary), (ii)~locally covariant
conformal AQFTs in two spacetime dimensions, and (iii)~locally covariant AQFTs in one spacetime dimension.
\end{ex}


\section{\label{sec:emptystrictification}Strictification criteria for empty orthogonality relations}
In this section we consider the case of an orthogonal category 
$\ovr{\CC} = (\CC,\emptyset)$ that carries an empty orthogonality relation
and establish criteria under which there exists a 
strictification theorem for the associated homotopy time-slice axiom.
This case is motivated by its relevance for describing the
{\em relative Cauchy evolution} (RCE) for $\Ch_\bbK$-valued AQFTs, 
see \cite{BFSRCE} for earlier results that we will generalize 
in Example \ref{ex:RCE} below.
\sk

The main simplification that arises in the case of an empty orthogonality relation
is that the category of $\Ch_\bbK$-valued AQFTs from Theorem \ref{theo:projmodel}
can be presented as the category
\begin{flalign}\label{eqn:AQFTFunCat}
\AQFT(\CC,\emptyset) \,\cong\, \Fun\big(\CC,\Alg_{\mathsf{As}}^{}(\Ch_\bbK)\big)
\end{flalign}
of functors from $\CC$ to associative and unital differential graded algebras.
The projective model structure  on $\AQFT(\CC,\emptyset)$ from Theorem \ref{theo:projmodel} then
gets identified with the projective model structure on $\Fun\big(\CC,\Alg_{\mathsf{As}}^{}(\Ch_\bbK)\big)$.
\sk

Using Proposition \ref{prop:Olocalizationmodel}, we obtain that the orthogonal localization
$L : (\CC,\emptyset)\to (\CC[W^{-1}],\emptyset)$ of $\ovr{\CC}=(\CC,\emptyset)$ 
at any set of morphisms $W\subseteq \Mor\,\CC$ is given by the usual categorical localization $L$
and that the target category carries an empty orthogonality relation $L_\ast(\emptyset) = \emptyset$ too.
The key tool that we will use in order to simplify the strictification problem of AQFTs
is the following square of Quillen adjunctions
\begin{flalign}\label{eqn:squareadjunction}
\xymatrix@R=4em@C=5em{
\ar@<0.75ex>[d]^-{i^\ast} \mathcal{L}_{\widehat{W}}\AQFT(\CC,\emptyset) \ar@<0.75ex>[r]^-{L_!}~&~ \ar@<0.75ex>[l]^-{L^\ast} \AQFT\big(\CC[W^{-1}],\emptyset\big)\ar@<0.75ex>[d]^-{j^\ast}\\
\ar@<0.75ex>[u]^-{i_!} \mathcal{L}_{\widetilde{W}}\Fun(\CC,\Ch_\bbK) \ar@<0.75ex>[r]^-{\Lan_L^{}}~&~\ar@<0.75ex>[l]^-{L^\ast}\Fun\big(\CC[W^{-1}],\Ch_\bbK\big) \ar@<0.75ex>[u]^-{j_!}
}
\end{flalign}
The top horizontal adjunction is the one controlling the strictification problem for AQFTs,
see Proposition \ref{prop:localizedadjunction}. The bottom horizontal adjunction is its analogue for
$\Ch_\bbK$-valued functors, where $\Lan_L^{}$ denotes the left Kan extension along the underlying 
functor $L:\CC\to\CC[W^{-1}]$. The left Bousfield localization 
$\mathcal{L}_{\widetilde{W}}\Fun(\CC,\Ch_\bbK)$ is defined in analogy to Remark \ref{rem:Whatmaps} 
with the set of maps  $\widetilde{W}$ given by $\mathsf{y}(f)[r]$, for all $(f:M\to N)\in W$ and all integers $r\in\bbZ$,
i.e.\ without applying the functor $i_!$. 
The right adjoints $i^\ast$ and $j^\ast$ 
of the vertical adjunctions are the functors that forget the multiplications and units of an AQFT.
Their left adjoints $i_!$ and $j_!$ are, in the present case of empty orthogonality 
relations (see \eqref{eqn:AQFTFunCat}), simply the functors that take object-wise free algebras.
For later use, we would like to record the following commutativity properties
\begin{flalign}\label{eqn:squarecomrels}
i^\ast\,L^\ast\,=\,L^\ast\,j^\ast\quad,\qquad
L_!\,i_!\,\cong\, j_!\,\Lan_L^{}\quad,\qquad
i_!\,L^\ast \,=\, L^\ast\,j_!\quad
\end{flalign}
of the functors in the square of adjunctions \eqref{eqn:squareadjunction}.
Note that the last property follows from the fact that $i_!$ and $j_!$ 
form object-wise free algebras, which commutes with pullback functors, 
hence it is special to the case of empty orthogonality relations.
\sk

The following result reduces the strictification problem for the homotopy time-slice
axiom of AQFTs with empty orthogonality relation to a simpler, but still non-trivial, 
strictification problem for $\Ch_\bbK$-valued functors.
\begin{theo}\label{theo:emptystrictification}
Let $\ovr{\CC} = (\CC,\emptyset)$ be an orthogonal category with empty 
orthogonality relation and $W\subseteq \Mor\,\CC$ a subset.
Suppose that the Quillen adjunction 
\begin{flalign}\label{eqn:Lanadjunction}
\xymatrix{
\Lan_L^{} \,:\, \mathcal{L}_{\widetilde{W}}\Fun(\CC,\Ch_\bbK)\ar@<0.5ex>[r]~&~\ar@<0.5ex>[l]
\Fun\big(\CC[W^{-1}],\Ch_\bbK\big)\,:\,L^\ast
}
\end{flalign}
for $\Ch_\bbK$-valued functors
is a Quillen equivalence. Then the Quillen adjunction
\begin{flalign}
\xymatrix{
L_!\,:\, \mathcal{L}_{\widehat{W}}\AQFT(\CC,\emptyset)\ar@<0.5ex>[r]~&~\ar@<0.5ex>[l]
\AQFT\big(\CC[W^{-1}],\emptyset \big)\,:\,L^\ast
}
\end{flalign}
from Proposition \ref{prop:localizedadjunction} is a Quillen equivalence too.
\end{theo}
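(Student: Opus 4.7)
My strategy is to deduce the Quillen equivalence of the top adjunction from that of the bottom one by leveraging the vertical free--forget adjunctions $i_! \dashv i^*$ and $j_! \dashv j^*$ together with the commutativity relations \eqref{eqn:squarecomrels}. The key observations are that the forgetful functors $i^*$ and $j^*$ preserve and reflect projective weak equivalences, and that between fibrant objects in the Bousfield localized model structures, local weak equivalences coincide with projective ones (see \cite[Theorem 3.2.13]{Hirschhorn}).

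I would first reduce the problem to checking that the derived unit $\eta_\AAA : \AAA \to L^* L_!(\AAA)$ is a weak equivalence for every cofibrant-fibrant $\AAA$ in $\mathcal{L}_{\widehat{W}}\AQFT(\CC,\emptyset)$. Both $\AAA$ and $L^* L_!(\AAA)$ (the latter satisfies even the strict time-slice axiom) are fibrant in the localized model structure, so the question reduces, via $i^*$ and the identity $i^*L^* = L^*j^*$, to showing that $i^*(\AAA) \to L^*(j^*L_!(\AAA))$ is a projective weak equivalence in $\Fun(\CC,\Ch_\bbK)$. On a free AQFT $\AAA = i_!(F)$ with $F$ cofibrant-fibrant in $\mathcal{L}_{\widetilde{W}}\Fun(\CC,\Ch_\bbK)$, the identities $L_! i_! \cong j_!\Lan_L$ and $i_!L^* = L^*j_!$ (whose combination with $i^*L^* = L^*j^*$ yields the formal identity $L^*T' = TL^*$ for the tensor algebra monads $T := i^*i_!$ and $T' := j^*j_!$) allow one to identify $i^*(\eta_{i_!F})$ with $T$ applied to the bottom adjunction's unit $F \to L^*\Lan_L(F)$. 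The latter is a projective weak equivalence (since $F$ and $L^*\Lan_L(F)$ are both fibrant in $\mathcal{L}_{\widetilde{W}}\Fun(\CC,\Ch_\bbK)$) by the hypothesized Quillen equivalence of the bottom adjunction, and $T$ preserves projective weak equivalences in characteristic zero because tensor powers and direct sums preserve quasi-isomorphisms in $\Ch_\bbK$.

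The extension from free AQFTs to arbitrary cofibrant-fibrant AQFTs would proceed by a transfinite cell-complex induction, since every cofibrant AQFT is a retract of a transfinite composition of pushouts of generating cofibrations of the form $i_!(\phi)$, and left properness combined with the good behavior of $L_!$ and the tensor algebra monad $T$ along such attachments propagates the weak-equivalence property at each stage. A completely symmetric argument, interchanging the roles of $i$ and $j$ and using $j^*$ as the detecting forgetful functor, handles the derived counit. The main technical obstacle lies precisely in this cell induction: propagating the derived-unit weak equivalence along pushouts and transfinite compositions of generating cofibrations requires delicate left-properness arguments and careful control over how $L_!$ and $T$ interact with these colimit constructions, a point that is smooth in the present $\Ch_\bbK$-setting but that is the place where the homotopical coherence between the two horizontal adjunctions is genuinely used.
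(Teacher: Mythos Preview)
Your reduction of the problem to free AQFTs $\AAA = i_!(F)$ (with $F$ locally fibrant) via the commutativity relations \eqref{eqn:squarecomrels} is correct and matches the paper's argument at that stage. The gap lies in the extension step. Cell induction on a cofibrant--fibrant $\AAA$ builds $\AAA$ as a transfinite composition of pushouts along generating cofibrations, but the intermediate stages $\AAA_n$ are cofibrant yet generically \emph{not} fibrant in $\mathcal{L}_{\widehat{W}}\AQFT(\CC,\emptyset)$: the representables $\mathsf{y}(M)$ underlying the generating cofibrations typically do not send $W$ to quasi-isomorphisms. This breaks your reduction to the functor level at each inductive stage, because identifying $i^*(\eta_{\AAA_n})$ with $T$ applied to a \emph{projective} weak equivalence hinges on local fibrancy of $i^*(\AAA_n)$. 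Moreover, the composite $L^* L_!$ appearing in the unit does not commute with pushouts ($L^*$ is a right adjoint), so even granting left properness there is no direct inductive gluing of the maps $\eta_{\AAA_n}$ into $\eta_{\AAA}$. Your assertion that this step is ``smooth in the present $\Ch_\bbK$-setting'' is not substantiated; the difficulty is structural, not a matter of characteristic.

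The paper circumvents both obstacles by replacing the cell decomposition with the cotriple resolution $\mathrm{Res}(\AAA)$ associated to the comonad $T=i_! i^*$. This expresses any $\AAA$ (no cofibrancy needed) as a simplicial object whose levels $T^k(\AAA)=i_!\big((i^*i_!)^{k-1}i^*(\AAA)\big)$ are all free on objects of $\Fun(\CC,\Ch_\bbK)$ \emph{and} inherit local fibrancy from $\AAA$, since both $i_!$ and $i^*$ preserve the homotopy time-slice condition. The reduction to individual simplicial levels then goes through because the normalized totalization $\Tot^\oplus$ (a homotopy colimit for simplicial dg-algebras) commutes up to weak equivalence with both the derived left adjoint $L_! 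Q$ and with $L^*$, the latter because $L^*$ acts by precomposition while $\Tot^\oplus$ acts by postcomposition. This commutation with the relevant homotopy colimit is exactly the ingredient your cell-induction sketch is missing.
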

\begin{proof}
To describe the derived functors, we pick 
a cofibrant replacement $(Q,q)$ for $\mathcal{L}_{\widehat{W}}\AQFT(\CC,\emptyset)$
and a cofibrant replacement $(Z,z)$ for $\mathcal{L}_{\widetilde{W}}\Fun(\CC,\Ch_\bbK)$,
such that $q$ and $z$ are natural projective weak equivalences. We have
to prove the following two statements: 
\begin{itemize}
\item[1.)] For all locally fibrant objects 
$\AAA\in \mathcal{L}_{\widehat{W}}\AQFT(\CC,\emptyset)$, i.e.\ $\AAA$ satisfies the homotopy time-slice axiom, 
the derived unit $\eta_{Q(\AAA)}^{} : Q(\AAA)\to L^\ast\,L_!\,Q(\AAA)$ is a local weak equivalence, or equivalently
a projective weak equivalence by \cite[Theorem 3.2.13]{Hirschhorn} because 
$Q(\AAA)$ and $L^\ast\,L_!\,Q(\AAA)$ are locally fibrant objects.

\item[2.)] For all objects $\BBB\in \AQFT\big(\CC[W^{-1}],\emptyset \big)$, the derived counit
\begin{flalign}
\xymatrix@C=1.5em{
L_!\,Q\,L^\ast(\BBB)  \ar[rr]^-{L_!q_{L^\ast(\BBB)}^{}} && L_!\,L^\ast(\BBB)\ar[r] \ar[r]^-{\epsilon_{\BBB}^{}}& \BBB
}
\end{flalign}
is a projective weak equivalence. 
\end{itemize}
Our proof strategy is to use
suitable cotriple resolutions \cite[Chapter 13.3]{Fresse} to reduce
this problem to objects of the form $\AAA=i_!(\FFF)$ and $\BBB = j_!(\GGG)$,
where $\FFF\in  \mathcal{L}_{\widetilde{W}}\Fun(\CC,\Ch_\bbK)$
is any locally fibrant object and $\GGG\in \Fun\big(\CC[W^{-1}],\Ch_\bbK\big)$ is any object.
Leveraging the square of adjunctions \eqref{eqn:squareadjunction}, and in particular
the commutativity properties \eqref{eqn:squarecomrels}, then allows us to deduce,
from the hypothesis that \eqref{eqn:Lanadjunction} is a Quillen equivalence,
that these components of the derived unit/counit are projective weak equivalences.
\sk

Our cotriple resolutions are determined by the vertical adjunctions in \eqref{eqn:squareadjunction}.
Let us consider first the left vertical adjunction.
The endofunctor $T:= i_!\,i^\ast$ on $\mathcal{L}_{\widehat{W}}\AQFT(\CC,\emptyset)$ defines
a comonad with coproduct $i_!\eta^i i^\ast : T = i_!\,i^\ast \to i_!\,i^\ast\,i_!\,i^\ast=T^2$
given by the adjunction unit and counit $\epsilon^i : T = i_!\,i^\ast\to \id$ given 
by the adjunction counit. This allows us to define a simplicial resolution
\begin{flalign}\label{eqn:Res}
\mathrm{Res}(\AAA)\,:=\,
\bigg( \xymatrix@C=1.5em{
T(\AAA) \ar@<0pt>[r] ~&~ T^2(\AAA) \ar@<4pt>[l] \ar@<-4pt>[l] \ar@<4pt>[r] \ar@<-4pt>[r] ~&~ \cdots \ar@<8pt>[l] \ar@<0pt>[l] \ar@<-8pt>[l]
} \bigg)\,\in \,\mathcal{L}_{\widehat{W}}\AQFT(\CC,\emptyset)^{\Delta^\op}
\end{flalign}
and an augmentation map $\mathrm{Res}(\AAA)\to\AAA$. Since both $i_!$ and $i^\ast$ act
object-wise on the underlying category $\CC$, so does $\mathrm{Res}$. In fact,
recalling \eqref{eqn:AQFTFunCat}, the resolution $\mathrm{Res}$ is simply 
an object-wise free resolution of dg-algebra valued functors. 
Using the normalized totalization functor
$\Tot^\oplus : \Alg_{\mathsf{As}}^{}(\Ch_\bbK)^{\Delta^\op} \to \Alg_{\mathsf{As}}^{}(\Ch_\bbK)$,
which is a homotopy colimit functor for simplicial diagrams of dg-algebras \cite{Harper},
the result in \cite[Lemma 13.3.3]{Fresse} implies that
\begin{flalign}
\xymatrix@C=2em{
\Tot^{\oplus}\mathrm{Res}(\AAA)\ar[r]^-{\sim}~&~\AAA
}
\end{flalign}
is a projective weak equivalence. The same holds true 
for the right vertical adjunction in \eqref{eqn:squareadjunction}
by using instead of $T$ the comonad $T^\prime := j_!\,j^\ast$ on $\AQFT\big(\CC[W^{-1}],\emptyset\big)$.
\sk

An important feature of these resolutions is that $\Tot^\oplus$ commutes (up to weak equivalence)
with both the derived right adjoint $L^\ast$ and
the derived left adjoint $L_!\,Q$ of the top horizontal adjunction in \eqref{eqn:squareadjunction}.
The former is a consequence of the fact that $L^\ast$ is a pullback functor
and $\Tot^\oplus$ acts by post-composition on functors, while the latter follows from
the fact that derived left adjoints $L_!\,Q$ commute with homotopy colimits.
Using also the explicit form of the simplicial resolution \eqref{eqn:Res}, this implies that
the question of whether the derived unit $\eta_{Q(\AAA)}^{} : Q(\AAA)\to L^\ast\,L_!\,Q(\AAA)$
is a projective weak equivalence for all locally fibrant $\AAA\in  \mathcal{L}_{\widehat{W}}\AQFT(\CC,\emptyset)$ 
can be reduced to objects of the form $\AAA = T(\AAA^\prime) = i_!\,i^\ast(\AAA^\prime)$,
for all locally fibrant $\AAA^\prime\in \mathcal{L}_{\widehat{W}}\AQFT(\CC,\emptyset)$.
Using that $i_!$ and $i^\ast$ both preserve and detect 
local fibrancy (i.e.\ the homotopy time-slice axiom), 
one can equivalently consider objects of the form $\AAA = i_!(\FFF)$, 
for all locally fibrant $\FFF\in \mathcal{L}_{\widetilde{W}}\Fun(\CC,\Ch_\bbK)$.
Applying the same arguments to the derived counit, one obtains that it is sufficient
to consider its components on objects of the form $\BBB = j_!(\GGG)$, for all 
$\GGG\in \Fun\big(\CC[W^{-1}],\Ch_\bbK\big)$.
\sk

 Using now the commutativity properties \eqref{eqn:squarecomrels}, one obtains the
commutative diagram
\begin{flalign}
\xymatrix@C=3.6em{
Q\, i_!(\FFF) \ar[d]_-{\eta_{Qi_!(\FFF)}^{}} ~&~
Q\, i_!\, Z(\FFF)  \ar[l]^-{\sim}_-{Qi_!z_{\FFF}^{}} \ar[d]_-{\eta_{Qi_!Z(\FFF)}^{}}  \ar[r]_-{\sim}^-{q_{i_!Z(\FFF)}^{}}  ~&~ 
i_! \, Z(\FFF) \ar[d]_-{\eta_{i_!Z(\FFF)}^{}}  \ar[dr]^-{i_! \eta^{\mathrm{Fun}}_{Z(\FFF)}}~&~ 
\\
L^\ast \,L_!\, Q \,i_!(\FFF) ~&~
L^\ast\,  L_!\, Q\, i_!\, Z(\FFF)   \ar[l]_-{\sim}^-{L^\ast L_! Qi_!z_{\FFF}^{}} \ar[r]^-{\sim}_-{L^\ast L_! q_{i_!Z(\FFF)}^{}} ~&~ 
L^\ast \, L_! \, i_! \, Z(\FFF) \ar[r]_-{\cong}~&~ 
i_!\, L^\ast\, \Lan_L^{}\, Z(\FFF)
}
\end{flalign}
that relates the derived units of the top and the bottom horizontal adjunction 
in \eqref{eqn:squareadjunction}. Recall that $(Z,z)$ denotes a projective cofibrant replacement 
for $\mathcal{L}_{\widetilde{W}}\Fun(\CC,\Ch_\bbK)$ and that projective weak equivalences are indicated by $\sim$.
Since by hypothesis the derived unit $\eta^{\mathrm{Fun}}_{Z(\FFF)}$ associated to 
$\Lan_L\dashv L^\ast$ is a projective weak equivalence
and since $i_!$ preserves projective weak equivalences, it follows that $\eta_{Qi_!(\FFF)}^{}$
is a projective weak equivalence too.
Using again \eqref{eqn:squarecomrels}, one obtains the
commutative diagram
\begin{flalign}
\xymatrix@C=4em{
L_!\,Q\,L^\ast\,j_!(\GGG) \ar[d]_-{L_! q_{L^\ast j_!(\GGG)}^{}} ~&~
L_!\,Q\,i_!\,L^\ast(\GGG) \ar[d]_-{L_! q_{i_!L^\ast(\GGG)}^{}} \ar[l]_-{\cong} ~&~
L_!\,Q\,i_!\,Z\,L^\ast(\GGG)  \ar[d]^-{L_! q_{i_!ZL^\ast(\GGG)}^{}}_-{\sim} \ar[l]_-{L_!Qi_!z_{L^\ast(\GGG)}^{}}^-{\sim} \\
L_!\,L^\ast\,j_!(\GGG) \ar[d]_-{\epsilon_{j_!(\GGG)}^{}}~&~
L_!\,i_!\,L^\ast(\GGG) \ar[l]_-{\cong} \ar[d]_-{\cong}~&~
L_!\,i_!\,Z\,L^\ast(\GGG) \ar[l]_-{L_! i_!z_{L^\ast(\GGG)}^{}} \ar[d]^-{\cong} \\
j_!(\GGG) ~&~
j_!\,\Lan_L^{}\,L^\ast(\GGG) \ar[l]^-{j_!\epsilon^{\mathrm{Fun}}_{\GGG}}~&~
j_!\,\Lan_L^{}\,Z\,L^\ast(\GGG)  \ar[l]^-{j_!\Lan_L z_{L^\ast(\GGG)}^{}}
}
\end{flalign}
that relates the derived counits of the top and the bottom horizontal adjunctions in \eqref{eqn:squareadjunction}.
The bottom horizontal composition is the application of the functor $j_!$ 
to the derived counit associated to $\Lan_L\dashv L^\ast$.
Since by hypothesis the latter is a projective weak equivalence and since $j_!$ preserves projective weak equivalences,
it follows that the left vertical composition is a projective weak equivalence too. This completes the proof.
\end{proof}

\begin{rem}\label{rem:emptyAQFTwise}
In analogy to Corollary \ref{cor:reflectiveAQFTwise}, 
a direct consequence of Theorem \ref{theo:emptystrictification}
is that each $\AAA\in \AQFT(\CC,\emptyset)$ that satisfies the homotopy 
time-slice axiom for $W$ is \textit{projectively} weakly equivalent to some
$\AAA_{\mathrm{st}}\in \AQFT(\CC,\emptyset)$ that satisfies the 
strict time-slice axiom for $W$. However, the construction
of such an object is more complicated in the present case
because it requires derived functors: Fixing as in
the proof of Theorem \ref{theo:emptystrictification} any cofibrant
replacement $(Q,q)$ for the projective model structure on 
$\AQFT(\CC,\emptyset)$ and recalling that every object in 
$\AQFT\big(\CC[W^{-1}],\emptyset \big)$ is fibrant, 
we may take $\bbR L^\ast := L^\ast$ and $\bbL L_! := L_!\,Q$ as models
for the derived functors. The component at $\AAA\in \AQFT(\CC,\emptyset)$
of the derived unit of the Quillen adjunction $L_!\dashv L^\ast$
is then given by the zig-zag
\begin{flalign}\label{eqn:zigzag}
\xymatrix{
\AAA ~&~ \ar[l]_-{q_{\AAA}^{}} Q(\AAA) \ar[r]^-{\eta_{Q(\AAA)}^{}}~&~ L^\ast\,L_!\,Q(\AAA)
}\quad.
\end{flalign}
The map $q_{\AAA}^{}$ is by construction a projective weak equivalence
and, as explained in the proof of Theorem \ref{theo:emptystrictification}, 
so is the map $\eta_{Q(\AAA)}$.
This implies that setting $\AAA_{\mathrm{st}}:= L^\ast\,\bbL L_!(\AAA) = L^\ast\,L_!\,Q(\AAA)$ 
defines an AQFT that satisfies the strict time-slice axiom 
and is equivalent to $\AAA$ via the zig-zag of \textit{projective} weak equivalences \eqref{eqn:zigzag}.
To obtain a computable model for $\AAA_{\mathrm{st}}$, one can use \cite[Theorem 17.2.7]{Fresse} 
to describe, up to further projective weak equivalences,
\begin{flalign}
\AAA_{\mathrm{st}} = L^\ast\,\bbL L_!(\AAA)\simeq L^\ast\,\Tot^{\oplus}B_{\Delta}\big(\O_{(\CC[W^{-1}],\emptyset)},\O_{(\CC,\emptyset)}^{},\AAA\big)
\end{flalign}
in terms of an operadic cotriple resolution, which can be worked out fairly explicitly
using the definitions in \cite[Chapter 13.3]{Fresse}. (Note that the
normalized totalization functor $\Tot^\oplus$ is denoted by $N_\ast$ in this book.)
\end{rem}

\begin{rem}\label{rem:inftylocalization}
A sufficient condition for the hypotheses of 
Theorem \ref{theo:emptystrictification} to hold true is that 
the ordinary localization functor $L : \CC \to \CC[W^{-1}]$ 
exhibits $\CC[W^{-1}]$ as an $\infty$-categorical 
localization of $\CC$ at $W$. This is a direct consequence of Hinich's 
rectification results \cite[Theorem 4.1.1]{Hinich2} for $\infty$-functors.
The question of whether the ordinary functor $L$ is an $\infty$-categorical 
localization can be addressed by using the explicit 
criteria established in \cite[Key Lemma 1.3.6]{Hinich3}. 
As a side-remark, we would like to note that
reflective localizations of categories are $\infty$-localizations,
hence our strictification theorem for locally covariant AQFTs in one spacetime dimension 
(see Example \ref{ex:explicitlistofexamples}) can be deduced alternatively from 
Theorem \ref{theo:emptystrictification} and Appendix \ref{app:Loc1}.
\end{rem}

\begin{ex}\label{ex:RCE}
Given any globally hyperbolic Lorentzian manifold $M$ with a 
sufficiently small and compactly supported metric perturbation $h$, one can form the RCE category
\begin{flalign}\label{eqn:RCEcategory}
\CC \,:=\, \left(
\parbox{3em}{\xymatrix@R=1.2em@C=1.2em{
&\ar[dl]_-{i_+}M_+ \ar[dr]^-{j_+}&\\
M&&M_h\\
&\ar[ul]^-{i_-} M_-\ar[ur]_-{j_-}&
}}
\right)
\end{flalign}
which controls the relative Cauchy evolution indexed by the pair $(M,h)$.
Here $M_h$ denotes the perturbed spacetime
and $M_\pm := M\setminus J^\mp(\supp(h))$ is the subspacetime with the causal past/future
of the support of $h$ removed. The arrows in \eqref{eqn:RCEcategory} are inclusions
and, by construction, they are all Cauchy morphisms, i.e.\ $W=\Mor\,\CC$. It is 
shown in \cite[Lemma 2.2]{BFSRCE} that the functor
\begin{flalign}
\nn L\,:\, \CC ~&\longrightarrow~\BB\bbZ\quad,\\
\nn M,M_-,M_h,M_+ ~&\longmapsto~\ast\quad,\\
\nn i_-~&\longmapsto~1\quad,\\
j_-,j_+,i_+~&\longmapsto~ 0\quad,
\end{flalign}
defines a localization of $\CC$ at $W$. Furthermore, the result in \cite[Theorem 4.2]{BFSRCE}
implies that \eqref{eqn:Lanadjunction} is a Quillen equivalence for the present example.
(As a side-remark, we would like to note that similar `universal covering' techniques 
for the RCE category as those initiated in \cite[Equation (2.13)]{BFSRCE} can be used to
check directly Hinich's criteria \cite[Key Lemma 1.3.6]{Hinich3}, which by Remark \ref{rem:inftylocalization} 
leads to an independent proof of this theorem.)
This means that the hypotheses of Theorem \ref{theo:emptystrictification} hold true, hence we obtain
a Quillen equivalence
\begin{flalign}
\xymatrix{
L_!\,:\, \mathcal{L}_{\widehat{W}}\AQFT(\CC,\emptyset)\ar@<0.5ex>[r]~&~\ar@<0.5ex>[l]
\Fun\big(\BB \bbZ, \Alg_{\mathsf{As}}^{}(\Ch_\bbK)\big)\,:\,L^\ast
}\quad.
\end{flalign}
This provides a generalization of the strictification theorem in
\cite[Theorem 5.4]{BFSRCE} from the case of linear homotopy AQFTs 
to all $\Ch_\bbK$-valued AQFTs on $\CC$ that satisfy the homotopy time-slice axiom. In particular, 
the strict $\bbZ$-action carried by the object $\bbL L_!(\AAA)\in \Fun\big(\BB \bbZ, \Alg_{\mathsf{As}}^{}(\Ch_\bbK)\big)$
provides a strict realization of the RCE for a $\Ch_\bbK$-valued AQFT $\AAA\in \AQFT(\CC,\emptyset)$ 
that satisfies the homotopy time-slice axiom.
\end{ex}


\section*{Acknowledgments}
We would like to thank the anonymous referees for valuable comments 
that helped us to improve the manuscript.
V.C.\ is partially supported by grants FPU17/01871 and 
PID2020-117971GB-C21 of the Spanish Ministry of Science and Innovation 
and by grant FQM-213 of the Junta de Andaluc\'ia.
A.S.\ gratefully acknowledges the support of 
the Royal Society (UK) through a Royal Society University 
Research Fellowship (URF\textbackslash R\textbackslash 211015)
and Enhancement Awards (RGF\textbackslash EA\textbackslash 180270, 
RGF\textbackslash EA\textbackslash 201051 and RF\textbackslash ERE\textbackslash 210053). 
The work of M.B.\ is fostered by 
the National Group of Mathematical Physics (GNFM-INdAM (IT)).

\section*{Conflict of interest statement}
On behalf of all authors, the corresponding author states that there is no conflict of interest. 

\section*{Data availability statement}
Data sharing not applicable to this article as no datasets were generated or analysed during the current study.

\appendix

\section{\label{app:Loc1}Localization of $\ovr{\Loc_1}$ at Cauchy morphisms}
In this appendix we present a very explicit model
for the localization at all Cauchy morphisms
of the orthogonal category $\ovr{\Loc_{1}}$
of connected\footnote{Restricting to connected manifolds 
simplifies the presentation of this appendix. Our proof below generalizes in a 
fairly obvious way to non-connected manifolds by treating each connected component separately.}
$1$-dimensional spacetimes.
The orthogonality relation in this case is empty $\perp_{\Loc_1}^{} =\emptyset$, and so 
will be the orthogonality relation on the localized category. 
Up to equivalence of (orthogonal) categories, we can present
$\Loc_1$ as the category whose objects are pairs $(M,e)$ consisting
of a $1$-manifold $M\cong \bbR$ and a non-degenerate $1$-form $e\in\Omega^1(M)$ 
(the vielbein, encoding the metric and time-orientation), and whose morphisms
$f:(M,e)\to (M^\prime,e^\prime)$ are all embeddings $f: M\to M^\prime$ such that
$f^\ast(e^\prime) = e$. Note that each object $(M,e)\in\Loc_1$ embeds via a $\Loc_1$-morphism
into $(\bbR,\dd t)$, where by $t$ we denote a (time) coordinate on $\bbR$. As a consequence,
$\Loc_1$ is equivalent to its full subcategory $\Loc^{\mathrm{skl}}_1\subseteq \Loc_1$ consisting of the objects
$((a,b), \dd t)\in \Loc_1$ determined by open intervals $(a,b)\subseteq \bbR$, with $-\infty \leq a<b\leq +\infty$.
(The superscript ${}^{\mathrm{skl}}$ refers to the concept of `skeletal models' as introduced in \cite{Skeletal}.)
The morphisms are given by translations 
$ f_{\xi} : ((a,b), \dd t)\to ((a^\prime,b^\prime), \dd t)\,,~
t\mapsto t+\xi$, with $\xi\in \bbR$ satisfying the condition
$a^\prime - a\leq \xi \leq b^\prime-b$ such that the interval $(a,b)$ 
is mapped into the interval $(a^\prime,b^\prime)$.
\sk

Let us denote by $\BB\bbR^{\delta}$ the category consisting of a single object, say $\ast$, with $\Hom$-set 
the Abelian group $\bbR^{\delta}:=\bbR$ (with no topology attached). The functor
\begin{flalign}
\nn j \,:\, \BB\bbR^{\delta}~&\longrightarrow~\Loc^{\mathrm{skl}}_1\quad,\\
\nn \ast~&\longmapsto~(\bbR,\dd t)\quad,\\
\xi\in \bbR^{\delta}~&\longmapsto~(f_\xi : (\bbR,\dd t)\to (\bbR,\dd t))\quad
\end{flalign}
is fully faithful and it admits a left adjoint given by
\begin{flalign}
\nn L\,:\, \Loc^{\mathrm{skl}}_1~&\longrightarrow~\BB\bbR^{\delta}\quad,\\
\nn ((a,b),\dd t)~&\longmapsto ~ \ast\quad,\\
\big(f_\xi: ((a,b), \dd t)\to ((a^\prime,b^\prime), \dd t)\big) ~&\longmapsto ~\xi\in \bbR^{\delta}\quad.
\end{flalign}
For completeness, let us mention that the adjunction unit $\eta : \id \to j L$
is given by the components $\eta_{(a,b)}:=f_0  : ((a,b),\dd t) \to (\bbR,\dd t)$
and that the counit $\epsilon : Lj\to \id$ is the identity natural isomorphism
of $Lj = \id$. Equipping all these categories with the empty orthogonality relation,
we obtain from Proposition \ref{propo:OCatdetectlocalization} that 
$L : \ovr{\Loc^{\mathrm{skl}}_1} \to\ovr{\BB\bbR^{\delta}}$
is a reflective localization at all morphisms. (Note that every morphism
in $\ovr{\Loc^{\mathrm{skl}}_1}$ is Cauchy.) Choosing a quasi-inverse of the equivalence
$\ovr{\Loc^{\mathrm{skl}}_1} \stackrel{\sim}{\longrightarrow} \ovr{\Loc_1}$ 
of orthogonal categories, we obtain a localization 
$\ovr{\Loc_1} \stackrel{\sim}{\longrightarrow} \ovr{\Loc^{\mathrm{skl}}_1}  
\stackrel{L}{\longrightarrow} \ovr{\BB\bbR^{\delta}}$ of $\ovr{\Loc_1}$ at all (Cauchy) morphisms.
Since equivalences in $2$-categories can be upgraded to adjoint equivalences, 
the latter localization is reflective too.


\end{document}